\documentclass[letterpaper,onecolumn,11pt,unpublished]{quantumarticle}
\pdfoutput=1
\usepackage[T1]{fontenc}
\usepackage{amsmath,amssymb,amsthm,mathtools,booktabs}
\usepackage{microtype}
\usepackage[numbers,sort&compress]{natbib}
\usepackage[colorlinks=true,linkcolor=blue,citecolor=blue,urlcolor=blue]{hyperref}
\makeatletter
\long\def\@makecaption#1#2{%
  \vskip\abovecaptionskip
  \sbox\@tempboxa{#1: #2}%
  \ifdim \wd\@tempboxa >\hsize
    #1: #2\par
  \else
    \global \@minipagefalse
    \hb@xt@\hsize{\hfil\box\@tempboxa\hfil}%
  \fi
  \vskip\belowcaptionskip}
\makeatother

\newtheorem{theorem}{Theorem}
\newtheorem{proposition}{Proposition}
\newtheorem{lemma}{Lemma}
\newtheorem{corollary}{Corollary}
\theoremstyle{definition}
\newtheorem{definition}{Definition}
\theoremstyle{remark}
\newtheorem{remark}{Remark}

\newcommand{\R}{\mathbb R}
\newcommand{\E}{\mathbb E}
\newcommand{\ket}[1]{\lvert #1\rangle}
\newcommand{\ind}[1]{\mathbf 1_{\{#1\}}}
\newcommand{\sgn}{\operatorname{sgn}}
\newcommand{\tr}{\operatorname{tr}}
\newcommand{\diag}{\operatorname{diag}}

\title{Quantum ordinal optimization under a fixed comparison budget}
\author{Hongyu Wu}
\affiliation{Mike Wiegers Department of Electrical and Computer Engineering, Kansas State University,
Manhattan, KS 66506, USA}
\date{}

\begin{document}
\maketitle

\begin{abstract}
Ordinal optimization selects designs that are good enough rather than
optimal when evaluation is expensive and the budget is small. The alignment
probability measures the chance that the selected set contains at least a
prescribed number of truly good designs. We formulate quantum ordinal
optimization as a class of problems that maximize this probability under a
fixed oracle budget. We study exact, coherent pairwise comparisons with one selected design as a first
instance, with the budget counted in comparison queries. For every classical comparison budget smaller than the number of designs,
we prove that a knockout among uniformly sampled designs attains the optimum.
The sample is drawn without replacement and contains one more design than
the number of comparisons. For a uniformly random order, the selected design's rank has a polynomial
probability law. At the same comparison budget, the maximum quantum degree is at most twice the classical
maximum, a factor attained whenever there are more than twice as many designs as
comparisons, so a quantum algorithm can realize rank laws that no classical algorithm
with the same budget can. We determine the quantum optimum with one
comparison for every number of designs and every good-enough set. For
selecting the best of $n$ designs, the optimal success probability with one
quantum comparison is approximately $3.30/n$ for large $n$, compared with
$2/n$ for one classical comparison. When the best design is sought,
one quantum comparison gives a higher alignment probability than two
classical comparisons whenever there are more than nine designs.
\end{abstract}

\section{Introduction}

Many design and operational decisions require a choice from a large finite set of
designs whose performance is estimated through expensive simulation.
Ordinal optimization (OO) addresses such problems by accepting good-enough
designs and comparing their order rather than their
values~\cite{HoSreenivasVakili1992,HoZhaoJia2007}. OO measures selection
success by the alignment probability, the probability that the selected
set contains at least a prescribed number of truly good-enough designs.
This required number is called the alignment level~\cite{LauHo1997}.
In constrained OO, a selected design needs to be feasible. Constrained OO
has been applied to stochastic security-constrained unit commitment with
high wind penetration, where a feasibility model guides a
blind pick (sampling designs uniformly at random without comparing their performance) among candidate schedules~\cite{WuShahidehpour2014}.
Recent studies continue to apply OO to stochastic job-shop
scheduling~\cite{GhasemiAshooriHeavey2021}, emergency-department resource
allocation~\cite{HorngLin2024}, and task scheduling in wireless edge
computing~\cite{WangSun2025}.

Quantum minimum finding shows that coherent access can reduce the number
of queries needed for selection~\cite{DurrHoyer1996}. A coherent comparison
acts on a superposition of design pairs, and the resulting amplitudes can
interfere before measurement. An algorithm can use this interference to
favor better ranks in its output distribution. For OO, the question is
whether this capability improves alignment probability at a fixed
comparison budget and how any improvement depends on the size of the
good-enough set. However, accepting more designs as good enough also makes
classical selection easier. Assessing a quantum benefit therefore requires
comparison with the classical optimum for the same target and comparison budget.

Asymptotic query bounds for quantum
selection~\cite{DurrHoyer1996,HuHuZhou2026} do not determine the exact
alignment probability at a small, fixed comparison budget. Fixed-budget
quantum simulation optimization has also been studied through convergence
rates for the expected optimality gap under quantum simulation
access~\cite{HuHu2026}. That criterion measures loss in performance value,
whereas alignment measures whether the selected rank meets the target.
To the best of our knowledge, prior work has not determined the exact
alignment optimum over all admissible classical or quantum strategies in
the comparison model. This optimum is needed to distinguish a benefit of
coherent access from a weakness of a particular classical selection rule.

To address this gap, we formulate quantum ordinal optimization (QOO)
as a class of problems that maximize the worst-case alignment probability
under a specified oracle model and query budget. The oracle, selected-set
size and alignment level are part of the problem specification. As a first
instance, we study exact coherent pairwise comparisons with one selected
design and alignment level one. Exact comparisons separate selection from
estimation error, while a single output reduces alignment to meeting the
rank target. For each
budget and good-enough set, the objective is the largest alignment
probability that can be guaranteed for every strict order of the designs.
The budget counts only comparison queries. Estimation error and
implementation cost require further models, as discussed in
Section~\ref{sec:program}.

In this comparison problem, we establish the classical optimum for every
budget smaller than the number of designs and determine the one-comparison
quantum optimum. One comparison is the smallest budget at which selection
can depend on the unknown order, yet it already permits a quantum gain.
We report each value as an equivalent blind-pick size, the number of designs that a
blind pick would need to reach the same alignment probability, which need not be an
integer.
Our main results are as follows.
\begin{enumerate}
\item For every good-enough set and every budget smaller than the number of
designs, we prove that the optimal alignment probability with exact classical
comparisons equals that of a blind pick whose size exceeds the budget by one.
A knockout among the blindly picked designs attains this probability.
Although the comparisons can involve twice as many designs as their number,
no adaptive procedure that returns one design does better than this blind
pick when comparisons are exact.
\item Averaged over a random order, the law of the rank of the design
selected by a quantum algorithm is a polynomial in the rank. Its degree is at
most twice the
comparison budget or one less than the number of designs, whichever is
smaller, and this bound is attained. By contrast, the degree for classical algorithms is at
most the comparison budget. At one comparison the larger quantum degree
is necessary for any gain.
\item We determine the exact quantum optimum with one comparison for every
number of designs and every good-enough set, with a matching upper
certificate and an attaining algorithm. The value is affine in the smallest
positive root of an explicit cubic, with closed forms for the best design
and for the top half, and it gives the limiting success probability for a fixed good-enough fraction.
\item For selecting the best design, the equivalent blind-pick size of one quantum comparison increases with the number of designs toward
$(3+\sqrt{13})/2\approx3.30$, compared with $2$ for one classical
comparison, and it exceeds that of two classical comparisons exactly when
there are more than nine designs. For the top $5$ percent of $1000$ designs
it is about $3.27$. In the cases computed it decreases as the good-enough set
grows, and for the top half of $1000$ designs it is below that of two
classical comparisons.
\end{enumerate}

The one-comparison result also raises the question of how the quantum
benefit changes as more comparisons are allowed.
Known search bounds and the reduction in Section~\ref{sec:program-budgets}
show that for selecting the best design the quantum equivalent blind-pick
size grows at most quadratically in the comparison
budget~\cite{Zalka1999,BennettEtAl1997}, and minimum finding attains this
order when the number of designs is at least of the order of the square of
the budget~\cite{DurrHoyer1996}. The classical size is the budget plus one, so the ratio of quantum to classical sizes
grows at most in proportion to the budget and reaches that order when the designs are
numerous enough. This paper determines the ratio exactly for one comparison.

Section~\ref{sec:prior} reviews related work, and Section~\ref{sec:model} defines the
model and measures. Sections~\ref{sec:classical}--\ref{sec:one} establish the classical
optimum, the quantum degree bound and the exact one-comparison value.
Section~\ref{sec:oo} compares the values in OO terms, Section~\ref{sec:program}
discusses larger budgets, evaluations and computational cost, and
Section~\ref{sec:conclusion} summarizes the findings.

\section{Relation to prior work}\label{sec:prior}

The relevant literature treats selection from performance estimates,
selection through comparisons and exact optimization of quantum query
algorithms. We distinguish these settings because their oracles and success
criteria determine which guarantees can be compared with OO alignment.

\paragraph{Classical ordinal optimization.}
OO uses the fact that an estimated order can converge faster than estimated
values. For several common performance measures, the order converges
exponentially fast in the simulation length~\cite{Dai1996}. Its selection
rules include blind pick, which selects designs at random; the horse race,
which selects the designs that look best under a crude
model~\cite{LauHo1997}; and budget allocation rules such as optimal computing
budget allocation (OCBA)~\cite{ChenLinYucesanChick2000}. These rules are
catalogued and compared in~\cite{HoZhaoJia2007}.

Classical OO compares selection rules mainly by simulation. Lau and Ho
listed round-robin and pairwise comparison rules as yet to be
studied~\cite{LauHo1997}. These rules were later defined for noisy
observations and compared numerically, and no universally best selection
rule is known~\cite{HoZhaoJia2007}. Apart from budget allocation rules such as OCBA, existing OO optimality results apply to
the final selection given the observations. Under independent, identically
distributed noise and equal allocation, choosing the observed top designs
is optimal~\cite{HoZhaoJia2007}, and with design-dependent noise the optimal
rule ranks designs by their probability of being good
enough~\cite{YangLee2002}. These results do not optimize which comparisons
to make.

\paragraph{Quantum selection from estimated values.}
For quantum best-arm identification with reward gaps $\Delta_i$, order
$\sqrt{\sum_i\Delta_i^{-2}}$ queries suffice and are necessary, up to
polylogarithmic factors~\cite{WangYouLiChilds2021}. Weaker oracle models
change the rate~\cite{WangEtAl2025}. Whether a speedup exists depends on
access to the simulator's randomness~\cite{BuchholzKublerScholkopf2025}.
With a weak quantum oracle, a fixed-confidence algorithm identifies a given
number of best arms as a set~\cite{WangEtAl2025QCE}.
Quantum Grover adaptive search for discrete simulation optimization returns an
$\epsilon$-optimal design with probability at least $1-\delta$ using
$\tilde O(\sqrt n)$ queries for constant $\epsilon$ and
$\delta$~\cite{HuHuZhou2026}. That is a fixed-confidence guarantee.
Hu and Hu study fixed-budget simulation optimization through the minimax
expected optimality gap and establish convergence rates under quantum
simulation access~\cite{HuHu2026}. Both formulations use performance
values estimated from a simulator. Here success depends only on the
selected rank, and we determine exact finite-budget values in a
comparison-oracle model.

\paragraph{Selection with comparisons.}
Minimum finding uses order $\sqrt n$ comparisons against a moving
threshold~\cite{DurrHoyer1996}, and finding the $r$ smallest values needs
order $\sqrt{rn}$ for $r\le n/2$~\cite{DurrHeiligmanHoyerMhalla2006}.
Bounds for approximating an order statistic give order $\sqrt{n/r}$, up to
logarithmic factors, for an output among the top $r$~\cite{NayakWu1999}.
Grover adaptive search has been analyzed for reaching one of the best $r$
points~\cite{BaritompaBulgerWood2005}. That work bounds the expected cost
and computes the success probability numerically as a function of the
number of function evaluations for two given algorithms. Comparison oracles
on permutations are the standard setting of quantum lower bounds for
sorting and element distinctness~\cite{HoyerNeerbekShi2002}. Wang and Ying use a bounded-error version of this comparison oracle~\cite{WangYing2024}.

\paragraph{Exact results with few queries.}
Grover's algorithm is exactly optimal for search with one marked
item~\cite{Zalka1999}. Zero-error few-query algorithms for ordered search were
found numerically~\cite{FarhiEtAl1999,ChildsLandahlParrilo2007}, and
translation-invariant algorithms are optimal for that
problem~\cite{CarolanEtAl2025}. For small Boolean functions, optimal success probabilities have been computed
numerically by semidefinite programming in a study of zero-error quantum query complexity~\cite{MontanaroJozsaMitchison2015}, and optimal values are
known for oracle problems with a group
structure~\cite{CopelandPommersheim2021} and for one-query AND and
EQUALITY~\cite{AmbainisIraids2017}. Several of these results use the
semidefinite characterization of quantum query
algorithms~\cite{BarnumSaksSzegedy2003}.

The comparison results provide query bounds for selection, while the
few-query results provide methods for optimizing success at a specified
budget. Below, we formulate rank alignment under comparison access,
establish the exact classical benchmark and apply these methods to the
one-comparison quantum problem. Selection from estimated values requires
a different evaluation model, as discussed in
Section~\ref{sec:program-models}.

\section{Model and performance measures}\label{sec:model}

Determining the best alignment probability at a fixed budget requires a
QOO problem to specify what information queries provide and what counts as
a successful selection. We first define the problem class with these choices
left to each problem. We then specialize it to exact comparisons with one
selected design, for which the following sections establish the results.

\subsection{Quantum ordinal optimization}\label{sec:qoo-general}
\begin{definition}[Quantum ordinal optimization]\label{def:qoo}
A quantum ordinal optimization problem specifies a class of unknown
configurations, each of which determines a strict order of $n$ designs and
may contain further data such as performance values or gaps. It specifies
the good-enough set of the $r$ best designs, an oracle whose responses depend
on the configuration, a budget of at most $q$ calls on every branch of an algorithm's measurements and private randomness, and
the size of the selected set and its alignment level. The oracle model
specifies its quantum action and the response to a classical query. An
algorithm succeeds when its output set contains at least the prescribed
alignment level of good-enough designs.

An admissible quantum algorithm uses input-independent state preparation,
unitary operations, measurements, private randomness and finite workspace.
It has no postselection and no access to the configuration except through
the specified oracle. The classical counterpart uses private randomness
and adaptive queries to the same oracle on one classical input at a time.
Both obey the prescribed query budget and selected-set size. The quantum
value is the supremum, over admissible quantum algorithms, of the infimum
of their success probabilities over the configuration class. The classical
value is defined in the same way over classical algorithms.
\end{definition}

The oracle can provide comparisons, evaluations or membership information,
and the selected set can contain one or several designs. Each choice
specifies a QOO problem with its own classical benchmark. For a fixed oracle and selected-set size, the quantum value of a QOO problem is the
optimal success probability of a query problem at a fixed number of queries, the
quantity studied for other problems in the few-query work of Section~\ref{sec:prior}.
What OO adds is the good-enough target with its alignment level, the size of the
selected set, and the comparison with the optimal classical procedure and with blind
pick. Section~\ref{sec:program-models} discusses how changes in evaluation and selection
affect the formulation.

\subsection{The exact-comparison instance}\label{sec:comparison-instance}
We study the instance of Definition~\ref{def:qoo} whose configurations are
all strict orders, with the exact comparison oracle, selected-set size one
and alignment level one. Classical and quantum procedures use the same
comparison rule and good-enough target. They differ in how they query that
rule, and this instance requires neither performance values nor numerical gaps
between designs.

Let $[n]=\{1,\ldots,n\}$ label $n$ designs. Their unknown performance order
is a strict order, encoded by the rank vector $x=(x_1,\ldots,x_n)$, a
permutation of $[n]$ in which rank $1$ is best. An algorithm receives labels and
oracle answers, never ranks. For $0\le r\le n$ the good-enough set is the
true top $r$, $G_r(x)=\{i:x_i\le r\}$. The algorithm outputs one label $I$,
and it succeeds when $I\in G_r(x)$. In OO terms the selected set is $\{I\}$,
of size one, the alignment level is one, and the alignment probability is
$\Pr[x_I\le r]$~\cite{LauHo1997}.
For $r=1$, this is the probability of correct selection (PCS) in ranking
and selection~\cite{HongFanLuo2021}.
For $r>1$, it is a rank-based form of the probability of good selection
(PGS), where goodness requires rank at most $r$, whereas Eckman and
Henderson define good selection by a mean performance gap from the best
that is smaller than a prescribed tolerance~\cite{EckmanHenderson2018}.

\paragraph{Comparison oracle.}
A quantum comparison is the oracle
\begin{equation}\label{eq:oracle}
O_x\ket{i,j,c,w}=\ket{i,j,c\oplus\ind{x_i<x_j},w},\qquad i,j\in[n],
\end{equation}
where $c$ is a response bit, $w$ is workspace, and $\ind{x_i<x_i}=0$. This
is the standard comparison oracle on permutations~\cite{HoyerNeerbekShi2002,WangYing2024}.
A classical comparison queries one pair and receives the bit $\ind{x_i<x_j}$.
An admissible quantum algorithm in this instance obeys
Definition~\ref{def:qoo}, makes at most $q$ calls to~\eqref{eq:oracle} on
every branch and outputs one label. Its only access to $x$ is through this
oracle. A classical algorithm is an adaptive comparison tree with private
randomness and at most $q$ comparisons on every branch. Only oracle calls
are charged.

\begin{remark}
A self-address $i=j$ returns no information about $x$. It is allowed for
convenience, and no value in this paper depends on it, because every call
already acts as the identity on the $\ket+$ eigenspace of the response bit.
\end{remark}

\begin{definition}
For $q\ge0$, $n\ge1$ and $0\le r\le n$, the quantum value is
\begin{equation}\label{eq:value}
Q_q(n,r)=\sup_A\ \min_x\ \Pr_x^A[x_I\le r],
\end{equation}
where $A$ ranges over admissible algorithms with budget $q$ and $x$ over
strict orders. The classical value $C_q(n,r)$ is defined in the same way
over classical algorithms.
\end{definition}

\subsection{Worst case and average case}
The worst-case objective can be analyzed through a uniform random order
because an algorithm may randomly relabel designs before making
comparisons. The same relabeling is used throughout the run, so the
comparison answers remain consistent.
\begin{proposition}\label{prop:twirl}
For every admissible algorithm $A$ there is an admissible algorithm $A'$
with the same budget whose success on every order equals the success of $A$
averaged over a uniformly random order, and whose whole rank law has the
same property. Consequently $Q_q(n,r)$ and $C_q(n,r)$ equal the suprema of
the uniform-average success.
\end{proposition}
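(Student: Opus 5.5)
The plan is the standard symmetrization (twirling) argument. Given $A$, define $A'$ as follows. First draw a uniformly random permutation $\pi$ of $[n]$ from private randomness and store it classically (for a quantum algorithm, as a classical random choice or a workspace register that is never touched again). Then run $A$, but before every oracle call apply the label map $\ket{i,j,c,w}\mapsto\ket{\pi(i),\pi(j),c,w}$. After the call apply its inverse. Finally replace the output label $I$ by $\pi(I)$.

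For a fixed $\pi$, the conjugated call acts as $O_{x\circ\pi}$, since
\[
\ind{x_{\pi(i)}<x_{\pi(j)}}=\ind{(x\circ\pi)_i<(x\circ\pi)_j}.
\]
Because the same $\pi$ is used in every call, the internal run of $A$ is exactly a run of $A$ on the order $y=x\circ\pi$, which is again a strict order. The conjugating permutations are query-free unitaries, or relabelings of the queried pair in the classical case, so the budget $q$ on every branch is unchanged. Admissibility is also preserved: the state preparation is input independent, only private randomness is added, the workspace stays finite, and there is no postselection.

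Next compare ranks. If $A$ on $y$ outputs $J$, then $A'$ outputs $\pi(J)$, whose rank under $x$ is
\[
x_{\pi(J)}=(x\circ\pi)_J=y_J .
\]
So for every $k$,
\[
\Pr_x^{A'}[x_I=k]=\frac1{n!}\sum_\pi \Pr^{A}_{x\circ\pi}[y_J=k].
\]
For fixed $x$, the map $\pi\mapsto x\circ\pi$ is a bijection onto all strict orders, so the right side is the rank law of $A$ averaged over a uniform order, independently of $x$. Summing over $k\le r$ gives the claim about success, and the whole rank law has the same property.

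For the consequence, note that for any $A$ the minimum over orders is at most the uniform average. Hence $Q_q(n,r)\le\sup_A\E_x\Pr_x^A[x_I\le r]$. Conversely, $A'$ has worst case equal to that average, so the two suprema coincide. The same argument applies verbatim to classical trees, where relabeling a queried pair is free, giving the result for $C_q(n,r)$.

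The only delicate step is checking that conjugation by the permutation is consistent with the oracle's action on superpositions, including the response bit and the self-address convention $i=j$. This holds because $\pi$ acts only on the address registers and maps $i=j$ to $\pi(i)=\pi(j)$.
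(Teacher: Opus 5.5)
Your proposal is correct and follows the paper's proof essentially step for step. You draw one uniform relabeling $\pi$ from private randomness, conjugate every call so that it acts as $O_{x\circ\pi}$, output $\pi(I)$, use the rank identity $x_{\pi(J)}=(x\circ\pi)_J$, and conclude with the bound of worst-case success by average success; your added checks on the budget, admissibility and the self-address convention fill in details the paper leaves implicit.
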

\begin{proof}
Let $A'$ draw a uniform permutation $\sigma$ of $[n]$ from private
randomness, and let
\[
P_\sigma\ket{i,j,c,w}=\ket{\sigma(i),\sigma(j),c,w}.
\]
Then $P_\sigma^{-1}O_xP_\sigma=O_{x\circ\sigma}$, where
$(x\circ\sigma)_i=x_{\sigma(i)}$. Let $A'$ run $A$ with every call $O_x$
replaced by $P_\sigma^{-1}O_xP_\sigma$ and output $\sigma(I)$. On a fixed
order $x$, $A'$ therefore behaves as $A$ on the order $x\circ\sigma$, which is
uniformly distributed, and the rank of $\sigma(I)$ under $x$ equals the rank
of $I$ under $x\circ\sigma$. The same $\sigma$ is used for every call and for
the output, so every rank event is uniformized at once. The worst-case
success of any algorithm is at most its average success, which gives the
second statement.
\end{proof}
We average the algorithm over relabelings in Proposition~\ref{prop:twirl}, as in
the symmetrization of semidefinite programs for ordered
search~\cite{CarolanEtAl2025}, whereas Childs and Lee average adversary matrices with the automorphism
principle~\cite{ChildsLee2008}. A fresh relabeling between calls
would not preserve the oracle answers, and it is not used.

\subsection{Blind pick and the equivalent blind-pick size}
A blind pick selects $s$ designs uniformly at random without comparing them.
Its selected set contains a design of the good-enough set with probability
\begin{equation}\label{eq:blind}
B(n,r,s)=1-\frac{\binom{n-r}{s}}{\binom ns},
\end{equation}
the blind-pick alignment probability at alignment level
one~\cite{HoZhaoJia2007,LauHo1997}. When a second stage identifies the best
of the $s$ designs exactly, as in the two-stage use of OO~\cite{HoZhaoJia2007},
that design is good enough with the same probability. For $r=1$,
$B(n,1,s)=s/n$.

For $1\le r\le n$ and with
$\binom ab=\Gamma(a+1)/\bigl(\Gamma(b+1)\Gamma(a-b+1)\bigr)$, the ratio
in~\eqref{eq:blind} extends to real $s$ in $[0,n-r+1]$, with
$B(n,r,n-r+1)=1$. The logarithmic derivative of the ratio in $s$ is
$\psi(n-r-s+1)-\psi(n-s+1)<0$, where $\psi$ is the digamma function, so
$B(n,r,s)$ increases strictly in $s$ from $0$ to $1$.
\begin{definition}
For $1\le r\le n$, the equivalent blind-pick size of a value $P\in[0,1]$ is the unique
$s_{\rm eq}(n,r)\in[0,n-r+1]$ with $B(n,r,s_{\rm eq})=P$. We write
$s_{\rm eq}^{Q}(n,r)$ for $P=Q_q(n,r)$ and $s_{\rm eq}^{C}(n,r)$ for
$P=C_q(n,r)$. For $r=1$, $s_{\rm eq}^{Q}(n,1)=n\,Q_q(n,1)$.
We leave the dependence on $P$ and, for the classical and quantum values,
on $q$ implicit in this notation.
\end{definition}
Using equivalent blind-pick sizes follows the OO practice of comparing
selection rules by the selected-set size needed for a given alignment
probability~\cite{HoZhaoJia2007}.

\paragraph{Scope.} Comparisons are exact, meaning noiseless, and the oracle is coherent, so the
algorithm may query superpositions of pairs. The budget counts oracle calls.
Exact comparisons model problems in which each design is evaluated
deterministically but at high cost, such as a unit commitment evaluated against a fixed set of scenarios. Whether such an
evaluation can be queried coherently, and at what cost, is the question of
Section~\ref{sec:program-cost}. When performance is observed only through
noisy simulation, an exact comparison cannot be implemented
directly~\cite{HuHuZhou2026}. Section~\ref{sec:program-models} discusses
extensions to that setting.

The choice of cost unit also affects the interpretation. In a model that
charges evaluated designs, a blind pick followed by keeping the best is the
natural rule. With comparisons as the cost unit,
Theorem~\ref{thm:classical} below shows that $q$ comparisons attain exactly
the alignment probability of $q+1$ evaluated designs, although the comparisons can involve up to $2q$ designs. The results of this paper count
comparison queries and make no claim about running time. If a coherent comparison is
built from an oracle that evaluates one design, computing and uncomputing both
evaluations uses four evaluation calls, and four evaluated designs already contain the
best design with probability $4/n$. The one-comparison gain of Sections~\ref{sec:one}
and~\ref{sec:oo} is therefore a gain per comparison and not per evaluation. The order of
growth discussed in Section~\ref{sec:program-budgets} does not change under this unit,
because $q$ comparisons then cost $4q$ evaluations while the quantum equivalent
blind-pick size grows quadratically in $q$. The gate cost of an oracle call
requires a further cost model, as discussed in Section~\ref{sec:program-cost}.

\section{The classical optimum}\label{sec:classical}

The blind-pick formula gives the success probability of a knockout on a
random subset with one more design than the comparison budget. We first
show that no adaptive comparison procedure can
improve this probability with the same budget. This establishes the
classical benchmark needed to assess the benefit of coherent access.

\begin{theorem}\label{thm:classical}
For integers $q\ge0$, $n\ge q+1$ and $0\le r\le n$,
\begin{equation}\label{eq:classical}
C_q(n,r)=1-\frac{\binom{n-r}{q+1}}{\binom n{q+1}}=B(n,r,q+1).
\end{equation}
The following algorithm attains this value on every order and for every $r$
at once. Choose $q+1$ distinct labels uniformly at random, compare the
current winner with the next label $q$ times, and output the final winner.
Its output rank law is
\begin{equation}\label{eq:classical-law}
\Pr[x_I=t]=\frac{\binom{n-t}{q}}{\binom n{q+1}},\qquad t\in[n].
\end{equation}
\end{theorem}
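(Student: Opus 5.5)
The proof splits into two parts: achievability by the knockout, and a matching upper bound for every adaptive comparison tree.

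\textbf{Achievability.} The knockout compares the current winner with each new label, so it returns the best of a uniformly random $(q+1)$-subset on every order. Its output has rank $t$ exactly when the subset contains the design of rank $t$ and $q$ designs of rank greater than $t$. This happens with probability $\binom{n-t}{q}/\binom n{q+1}$, which is \eqref{eq:classical-law}. Summing over $t\le r$, or noting that failure means all $q+1$ picks avoid the top $r$, gives $B(n,r,q+1)$ for every order and every $r$ at once.

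\textbf{Reduction for the upper bound.} By Proposition~\ref{prop:twirl} it suffices to bound the success of a deterministic comparison tree of depth at most $q$ under a uniformly random order; private randomness is a mixture of such trees. At a leaf $\ell$, the set $S_\ell$ of orders reaching it is the set of linear extensions of a partial order $P_\ell$ on $[n]$. This partial order is the transitive closure of the at most $q$ answered comparisons, so its comparability graph has at most $q$ edges. The tree outputs a fixed label $I_\ell$. Success is then
\[
\sum_\ell \frac{|\{x\in S_\ell:x_{I_\ell}\le r\}|}{n!}.
\]
We therefore need a bound on the fraction of extensions in which a fixed element lands in the top $r$, relative to the leaf's probability.

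\textbf{Main step (the obstacle).} Rather than bounding each leaf by its own mass, I would compare with the knockout through a coupling-free counting argument. Failure at $I_\ell$ requires $r$ other elements above it. I would try to show, by induction on $q$, that
\[
\Pr[x_I>r]\ \ge\ \binom{n-r}{q+1}\Big/\binom n{q+1}.
\]
The output $I$ is a label that has been involved in at most $q$ comparisons. Let $T$ be the set of labels that have appeared in some comparison on the realized path, together with $I$; then $|T|\le q+1$ if we add $I$ even when $I$ was never queried, and otherwise $|T|\le 2q$. The key observation I would aim for is this. Conditioned on the path, the designs outside the connected component of $I$ in the comparison graph are exchangeable with the unexamined ones relative to $I$'s component. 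So the success probability depends only on the component $K$ containing $I$, which has at most $q+1$ vertices because it is connected with at most $q$ edges.

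Given the component's internal order, the best one can do is output the maximum of $K$. Its probability of lying in the top $r$ is at most the probability that some element of a uniformly random $|K|$-set lies in the top $r$. This holds because the labels in $K$ are, before querying, a set chosen by an adaptive rule whose choices depend only on answers within other components, and answers about other components are independent of $K$'s relative ranks being uniform. Making this exchangeability rigorous is the expected obstacle. Adaptivity means $K$ is chosen from answers involving $K$ itself, and I would handle this by a direct induction on $q$ over the first query. That induction splits on whether the first comparison merges two components, using $B(n,r,s)$ increasing in $s$ and the fact that each comparison raises the size of a component by at most one. Concluding $\Pr[x_I\le r]\le B(n,r,|K|)\le B(n,r,q+1)$ finishes the proof.
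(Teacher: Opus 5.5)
Your achievability argument and your reduction to deterministic trees conditioned on a leaf are correct. The claim you isolate is also the paper's route: the output's component $K$ has at most $q+1$ vertices, and conditioned on the leaf its rank set is a uniform $|K|$-subset, so $\Pr[x_I\le r\mid E_\ell]\le B(n,r,|K|)\le B(n,r,q+1)$.

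\textbf{The gap.} Your proposal stops at the one step that carries the upper bound. You state the uniformity as something to aim for and call it the obstacle, and neither the justification nor the repair you offer works.

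\emph{The justification.} You argue that $K$ is chosen by a rule depending only on answers in other components. That is false for adaptive trees, as you concede.

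\emph{The induction over the first query.} It is not set up. After one answer the order is uniform only over orders with $x_i<x_j$, so a hypothesis about uniform orders does not apply to the subtree. The case split you describe also concerns component sizes, which are not the difficulty. The fact you cite there, that each comparison raises the size of a component by at most one, is false: one comparison can merge components of sizes $s_1$ and $s_2$ into one of size $s_1+s_2$. The correct fact, which you used earlier, is that a connected graph on $k$ vertices has at least $k-1$ edges.

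\emph{A smaller misstatement.} The comparability graph of the transitive closure can have more than $q$ edges. What has at most $q$ edges is the graph of compared pairs, which has the same components.

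\textbf{The missing idea.} Adaptivity disappears once you condition on a leaf of a deterministic tree.
\begin{itemize}
\item Each query is determined by the previous answers. So an order reaches the leaf exactly when it satisfies every recorded inequality on the path.
\item Hence $E_\ell$ is the set of linear extensions of one fixed partial order, and no selection effect remains.
\item Let $D$ be the complement of $K$. No recorded relation joins $K$ and $D$.
\item Fix any $k$-set $S$ of ranks, $k=|K|$. The orders in $E_\ell$ that put $K$ on $S$ arise exactly once each from two choices: a linear extension of the recorded order on $K$, placed on $S$ in increasing order, and a linear extension of the recorded order on $D$, placed on the complementary ranks.
\item Their number is $e_Ke_D$, the product of the numbers of linear extensions on $K$ and on $D$. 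This does not depend on $S$, so the rank set of $K$ is uniform.
\item Since $x_I\ge\min_{c\in K}x_c$, you get $\Pr[x_I\le r\mid E_\ell]\le 1-\binom{n-r}{k}/\binom nk$.
\item Finally, $k\le q+1$, and a uniform $(q+1)$-subset contains a uniform $k$-subset, which finishes the bound.
\end{itemize}
This counting is the paper's Lemma~\ref{lem:posterior}. The paper also proves independence from the internal order of $K$, which it needs later for the degree bound; for the theorem itself, uniformity of the rank set suffices.
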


To prove optimality, we examine the orders consistent with the comparison
answers along a branch of an arbitrary comparison tree. The following
lemma describes the remaining rank uncertainty in the connected component
containing the output label.

\begin{lemma}\label{lem:posterior}
Fix a deterministic comparison tree and a leaf $\ell$ reached with positive
probability under a uniform order. Let $E_\ell$ be the set of orders that
reach $\ell$, let $a$ be its output label, and let $C$ be the connected
component of $a$ in the graph whose edges are the distinct pairs compared on
the path to $\ell$. Put $k=|C|$. Conditional on $E_\ell$, the set of ranks
occupied by $C$ is a uniform $k$-subset of $[n]$, independent of the order
within $C$, and
\begin{equation}\label{eq:posterior}
\Pr[x_a=t\mid E_\ell]=\sum_{j=1}^k\alpha_j\,
\frac{\binom{t-1}{j-1}\binom{n-t}{k-j}}{\binom nk},
\end{equation}
where $\alpha_j\ge0$ is the fraction of linear extensions of the order
recorded on $C$ in which $a$ has the $j$-th smallest rank within $C$, and
$\sum_j\alpha_j=1$.
\end{lemma}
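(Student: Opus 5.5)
We would prove Lemma~\ref{lem:posterior} by describing $E_\ell$ exactly as a set of orders. Along the path to $\ell$ the tree is deterministic, so the event $E_\ell$ is the set of permutations $x$ that agree with each recorded answer $\ind{x_i<x_j}$ on the path. Each answer is a constraint between two labels. Collect these constraints into a directed graph on $[n]$ whose underlying undirected graph has connected components $C=C_1,C_2,\dots,C_m$. Since every constraint involves two labels in the same component, $E_\ell$ factors: an order $x$ lies in $E_\ell$ if and only if, for each component $C_p$, the relative order that $x$ induces on $C_p$ is a linear extension of the recorded partial order $\prec_p$ on $C_p$. The event does not restrict which ranks each component occupies.

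We then count. An order $x$ is determined uniquely by three pieces of data: the ordered partition of $[n]$ into the rank sets $S_p=\{x_i:i\in C_p\}$, with $|S_p|=|C_p|$; and, for each $p$, the bijection from $C_p$ to $S_p$, which is the same as a relative order on $C_p$. The constraint $x\in E_\ell$ concerns only the relative orders, so
\[
E_\ell\cong\{(S_1,\dots,S_m)\}\times\prod_p L(\prec_p),
\]
where $L(\prec_p)$ is the set of linear extensions of $\prec_p$. Under the uniform order, conditioning on $E_\ell$ gives the uniform distribution on this product set. It has positive probability because $\ell$ is reached, so every $L(\prec_p)$ is nonempty. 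From the product structure:
\begin{itemize}
\item the marginal of $S_1$, the rank set of $C$, is uniform over $k$-subsets of $[n]$, since the number of ways to complete the partition does not depend on $S_1$;
\item $S_1$ is independent of the relative order on $C$, which is uniform on $L(\prec_1)$.
\end{itemize}
This is the first claim.

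For \eqref{eq:posterior}, let $J$ be the position of $a$ within $C$ in the relative order, so that $x_a$ is the $J$-th smallest element of $S_1$. By independence, $\Pr[J=j]=\alpha_j$, the fraction of linear extensions placing $a$ $j$-th. Given $J=j$, the event $x_a=t$ means that $t\in S_1$, exactly $j-1$ elements of $S_1$ lie in $\{1,\dots,t-1\}$, and $k-j$ elements lie in $\{t+1,\dots,n\}$. For a uniform $k$-subset this has probability $\binom{t-1}{j-1}\binom{n-t}{k-j}/\binom nk$. Summing over $j$ gives \eqref{eq:posterior}. The properties $\alpha_j\ge0$ and $\sum_j\alpha_j=1$ are immediate.

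The main point needing care is the factorization of $E_\ell$. In particular, the output label $a$ is a fixed function of the leaf, so conditioning introduces no dependence. Repeated comparisons of the same pair give identical constraints, so using the distinct pairs loses nothing. Components without any comparison are singletons, which the argument treats uniformly.
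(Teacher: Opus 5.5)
Your proof is correct and is essentially the paper's argument. You identify $E_\ell$ with the orders satisfying the recorded inequalities, note that no constraint crosses components, and count orders by rank set and internal order to get uniformity and independence before reading off \eqref{eq:posterior}. The only difference is bookkeeping: you factor over every component, whereas the paper splits $[n]$ into $C$ and its complement $D$ and counts the $e_D$ linear extensions on $D$ as one block, a count that does not depend on the rank set or internal order of $C$.
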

\begin{proof}
Each comparison of distinct labels on the path to $\ell$ records one strict
inequality. A self-comparison imposes no constraint. An order
reaches $\ell$ exactly when it satisfies all recorded inequalities, because
the tree is deterministic and every answer on the path is prescribed. So
$E_\ell$ is the set of linear extensions of the recorded partial order. Let
$D$ be the complement of $C$. No recorded inequality joins $C$ and $D$, and
the transitive closure adds none, since a directed path between them would
be an undirected path in the comparison graph. Fix a $k$-subset $S$ of ranks
and a linear extension $\pi_C$ of the order on $C$. Place $\pi_C$ on the ranks $S$ in increasing order. For each linear extension
$\pi_D$ of the order on $D$, place $\pi_D$ on the complementary ranks. This
gives exactly one order in $E_\ell$ with rank set $S$ on $C$ and internal
order $\pi_C$, and every such order arises once. The count is the number $e_D$ of linear extensions of the order
on $D$, which does not depend on $S$ or $\pi_C$. The joint law of rank set
and internal order is therefore uniform, which proves independence. If $a$
has the $j$-th smallest rank within $C$ under $\pi_C$, then $x_a=t$ exactly
when $S$ contains $t$, $j-1$ ranks below $t$ and $k-j$ ranks above it, which
gives~\eqref{eq:posterior}.
\end{proof}

\begin{proof}[Proof of Theorem~\ref{thm:classical}]
\emph{Attainment.} The output rank is the minimum of a uniform $(q+1)$-subset
of ranks. It misses the top $r$ exactly when all $q+1$ ranks exceed $r$, which
has probability $\binom{n-r}{q+1}/\binom n{q+1}$. The output rank equals $t$ exactly when
the subset contains $t$ and $q$ ranks above $t$, which
gives~\eqref{eq:classical-law}. Because the subset is chosen uniformly, these
probabilities hold on every order.

\emph{Upper bound.} By Proposition~\ref{prop:twirl} it suffices to bound the
success averaged over a uniform order. Fixing the private randomness turns
an algorithm into a deterministic comparison tree with at most $q$
comparisons on every branch, and the average success of the algorithm is
the average over seeds of the average success of these trees. Fix a tree and
a leaf $\ell$ of positive probability. Its comparison graph has at most $q$
distinct edges, and a connected graph on $k$ vertices has at least $k-1$
edges, so the component of the output has $k\le q+1$ vertices. Let
$V_1,\ldots,V_{q+1}$ be a uniformly random sequence of distinct ranks. The
set $\{V_1,\ldots,V_k\}$ is a uniform $k$-subset, and for every $j\le k$ its
$j$-th smallest element is at least $\min(V_1,\ldots,V_k)$, which is at least
$\min(V_1,\ldots,V_{q+1})$. Each term of~\eqref{eq:posterior} is the law of
the $j$-th smallest element of a uniform $k$-subset, so
\[
\Pr[x_a\le r\mid E_\ell]\le\Pr[\min(V_1,\ldots,V_{q+1})\le r]
=1-\frac{\binom{n-r}{q+1}}{\binom n{q+1}}.
\]
Averaging over leaves and seeds gives the bound. Repeated or reversed
comparisons add no distinct edge, and a self-comparison adds no edge between
distinct labels. Leaves with contradictory answers have probability zero.
An output that was never compared has $k=1$ and rank law $1/n$.
\end{proof}

\begin{corollary}\label{cor:classical}
(i) For $n\ge q+1$ and $1\le r\le n-q$, the equivalent blind-pick size of
$q$ exact classical comparisons is exactly $s_{\rm eq}^{C}(n,r)=q+1$.
(ii) The rank law of every symmetrized classical algorithm with budget $q$
is, on the rank grid, a polynomial in the rank of degree at most
$\min(q,n-1)$.
\end{corollary}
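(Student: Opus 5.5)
Part (i) follows from Theorem~\ref{thm:classical} together with the strict monotonicity of $B(n,r,s)$ in $s$. Theorem~\ref{thm:classical} gives $C_q(n,r)=B(n,r,q+1)$. For $1\le r\le n-q$ we have $q+1\le n-r+1$, so $s=q+1$ lies in the interval $[0,n-r+1]$ on which the equivalent blind-pick size is defined. Since $B(n,r,\cdot)$ is strictly increasing there, the unique solution of $B(n,r,s)=C_q(n,r)$ is $s=q+1$. The only thing to check is that the Gamma-function extension agrees with the integer binomials at $s=q+1$, which holds by construction.

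For part (ii), I take a symmetrized classical algorithm $A'$ as in Proposition~\ref{prop:twirl}. Its rank law on every order equals the uniform-average rank law of the underlying algorithm $A$. That average decomposes over seeds, then over deterministic trees, then over leaves of positive probability, with nonnegative weights that do not depend on $t$. It therefore suffices to show that each conditional law \eqref{eq:posterior} is a polynomial in $t$ of degree at most $\min(q,n-1)$ on the grid $t\in[n]$, because a convex combination of such polynomials is again one. In \eqref{eq:posterior}, each term $\binom{t-1}{j-1}\binom{n-t}{k-j}$ is a product of a polynomial in $t$ of degree $j-1$ and one of degree $k-j$. This uses the fact that $\binom{m}{p}=m(m-1)\cdots(m-p+1)/p!$ is a polynomial in $m$ that vanishes at the integers $0\le m<p$, so it agrees with the combinatorial binomial for all integers $m\ge0$, and $t-1,n-t\ge0$ on the grid. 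Each term thus has degree $k-1$, and the lemma's argument gives $k\le q+1$, so each term has degree at most $q$.

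The remaining bound, degree at most $n-1$, is automatic: any function on the $n$ points of $[n]$ is interpolated by a unique polynomial of degree at most $n-1$. Equivalently, $k\le n$ because $C\subseteq[n]$. So the degree is at most $\min(q,n-1)$.

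The main subtle point is making sure the polynomial identity holds on the whole grid, including at the boundary values of $t$ where a binomial vanishes. The falling-factorial form handles this. I would also check that leaves of probability zero and uncompared outputs (the case $k=1$, giving the constant law $1/n$) fit the same framework.
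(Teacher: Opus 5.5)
Your proposal is correct and follows the paper's proof. Part (i) is read off from Theorem~\ref{thm:classical} and the strict monotonicity of $B(n,r,\cdot)$ on $[0,n-r+1]$. Part (ii) averages the leaf laws~\eqref{eq:posterior}, each of degree $k-1\le q$, and the $n-1$ bound comes from interpolation on $n$ points. Your falling-factorial remark, which justifies treating $\binom{t-1}{j-1}\binom{n-t}{k-j}$ as a polynomial on the whole grid, makes explicit a step the paper leaves implicit. One small correction: the bound $k\le q+1$ comes from the proof of Theorem~\ref{thm:classical}, not from Lemma~\ref{lem:posterior} itself.
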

\begin{proof}
Part (i) is~\eqref{eq:classical}, because $q+1\le n-r+1$ lies in the range
where $B(n,r,\cdot)$ is strictly increasing. For (ii), the symmetrized rank
law is the average of~\eqref{eq:posterior} over leaves weighted by their
probabilities, and each term of~\eqref{eq:posterior} has degree $k-1\le q$
in $t$. Every function on $n$ points is a polynomial of degree at most $n-1$.
\end{proof}

Theorem~\ref{thm:classical} determines the best comparison-based procedure
when comparisons are exact and one design is selected. No adaptive choice of
pairs improves on a blind pick of $q+1$ designs followed by a knockout,
although $q$ comparisons can involve up to $2q$ designs. Any procedure that returns the best of the same $q+1$ designs with $q$ comparisons
attains the same value, whether it compares the current winner with the next design, as
in the proof, or plays a bracket. The blind-pick formula itself is
classical~\cite{HoZhaoJia2007,LauHo1997}, and the winner-rank law of
tournament selection without replacement gives the same
expression~\cite{XieEtAl2008}.

For the upper bound, conditioning on a leaf of a comparison tree
with shuffled inputs appears in work on approximate selection with
unreliable comparisons~\cite{HuangLiuRutschmann2023}, and lower bounds there
and in~\cite{LeucciLiu2022} fix the asymptotic order of the number of
comparisons. The proof of Theorem~\ref{thm:classical} is elementary. It uses
leaf conditioning, a count of linear extensions and a stochastic comparison
of order statistics. The theorem gives the optimum over all adaptive procedures at every
budget $q\le n-1$ and every good-enough set.
For $q\ge n-1$ a knockout of all $n$ designs returns the best one, so
$C_q(n,r)=1$ for every $r\ge1$. In the
scenario-by-scenario comparison of selection rules proposed
in~\cite{HoZhaoJia2007}, Theorem~\ref{thm:classical} gives the best rule for
the scenario of exact comparisons and one selected design.
Earlier decision-tree bounds treat randomized decision problems~\cite{ManberTompa1985}
and verification that a designated element is the maximum~\cite{Snir1985},
rather than selection of one design at a fixed budget.

\section{Rank laws of quantum algorithms}\label{sec:degree}

The classical degree bound limits how output probabilities can vary with
rank. We now derive the corresponding bound for coherent comparisons and
show that it can be attained. The factor two in the degree bound occurs
because a probability is quadratic in the amplitudes.

\begin{theorem}\label{thm:degree}
Let $n\ge2$ and let an admissible quantum algorithm use at most $q$ calls
to~\eqref{eq:oracle} on every branch. After the symmetrization of
Proposition~\ref{prop:twirl}, its rank law $w_t=\Pr[x_I=t]$, $t\in[n]$, is
independent of the order and satisfies $w_t=p(t)$ for a polynomial $p$ of
degree at most $\min(2q,n-1)$. The degree bound is attained for every $q$
and $n$.
\end{theorem}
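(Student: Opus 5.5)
The proof has two parts: an upper bound on the degree via the polynomial method, and a construction showing the bound is attained.

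\emph{Upper bound.} I would use the polynomial method adapted to the comparison oracle. Encode the order $x$ by the Boolean variables $y_{ij}=\ind{x_i<x_j}$ for $i\ne j$. Each call~\eqref{eq:oracle} acts on a basis state by a map that is affine in a single $y_{ij}$. By the standard induction of Beals et al., after $q$ calls every amplitude is a polynomial of degree at most $q$ in the $y$'s. Intermediate measurements and private randomness do not break this: a measurement branch contributes a sum of squared moduli of such amplitudes, and the budget holds on every branch. Hence each output probability $\Pr_x[I=i]$ is a polynomial $P_i(y)$ of degree at most $2q$.

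Next, the symmetrized law $w_t=\frac1{n!}\sum_x\sum_i \ind{x_i=t}\Pr_x[I=i]$ must be turned into a polynomial in $t$. By linearity it suffices to treat a single monomial $\prod_{(i,j)\in M}y_{ij}$ with $|M|\le 2q$. I would follow the strategy of Lemma~\ref{lem:posterior}. The monomial is the indicator that $x$ satisfies a set of at most $2q$ pairwise inequalities. Multiplying by $\ind{x_a=t}$ and averaging over uniform $x$ gives the probability that $x$ satisfies the inequalities and that $x_a=t$. Let $C$ be the component of $a$ in the graph of $M$, with $k=|C|\le 2q+1$ vertices. The argument of Lemma~\ref{lem:posterior}, applied to the partial order recorded by $M$, or to zero if that order is cyclic, factors this probability. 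It becomes a constant times $\sum_j\alpha_j\binom{t-1}{j-1}\binom{n-t}{k-j}/\binom nk$, which has degree $k-1\le 2q$ in $t$. Summing over monomials and labels $a$ gives $\deg p\le 2q$. The bound $n-1$ is automatic on $n$ points.

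\emph{Main obstacle.} The main obstacle is checking that the amplitude degree really is at most $q$ in the $y$'s once workspace and self-addresses are included. It also requires the $y$'s to be treated formally, so that the consistency relations $y_{ij}=1-y_{ji}$ do not raise the degree. Substituting $y_{ji}=1-y_{ij}$ preserves degree, so this should be harmless.

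\emph{Attainment.} For $2q\le n-1$ it suffices to exhibit one algorithm per $q$ whose law has exact degree $2q$. Degree $\min(2q,n-1)$ for larger $q$ then follows by running a $q'$-query algorithm with $2q'=n-1$ or $n-2$, adjusted by mixing. I would mix with the classical law~\eqref{eq:classical-law} and check that the top coefficient survives. For $q=1$, a candidate is the following. Prepare a uniform superposition over ordered pairs $(i,j)$ with the response bit in $\ket-$, so that the phase $(-1)^{y_{ij}}$ is kicked back. Then apply a fixed unitary, such as a Fourier-type mixing of the first register, and measure a label. The output probability is then a quadratic form in the $y$'s containing products $y_{ij}y_{kl}$ that share a label. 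Their averaged rank contribution is $\Pr[x_j>t,x_l>t]$-type terms of degree $2$ in $t$. For general $q$ I would iterate this construction with $q$ phase queries on a Grover-like superposition. I would verify nonvanishing of the leading coefficient by evaluating $p$ at a few $t$ values or via finite differences. That step is the likeliest place where an explicit computation is needed, and I would first try the minimum-finding-style amplification, where $\Pr[\text{rank}\le t]$ behaves like $\sin^2((2q+1)\theta_t)$ with $\sin^2\theta_t\propto t$. That expression is visibly a degree-$(2q+1)$ polynomial in $t$ for the cumulative law, hence degree $2q$ for $w_t$.
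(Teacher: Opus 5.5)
Your upper bound is correct and is the paper's polynomial-method argument in $0/1$ variables. The paper expands in the signs $\sgn_{ab}$ and factors over graph components as in~\eqref{eq:rooted}. You instead read each monomial in the $y_{ij}$ as the indicator of a set of strict inequalities (zero if cyclic) and reuse the linear-extension count of Lemma~\ref{lem:posterior}.

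\textbf{The gap is attainment, which the proposal does not establish.}

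\emph{No construction for general $q$.} No algorithm is specified, and the nonvanishing of the top coefficient, which is the whole content of this half, is deferred. That check is not a formality. Take your $q=1$ candidate with the discrete Fourier transform on the first register followed by a measurement of that register. The $t^2$ coefficient of the symmetrized law then cancels identically, so the law has degree at most one.

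\emph{The amplification heuristic is not a comparison algorithm.} The quantity $\sin^2((2q+1)\theta_t)$ with $\sin^2\theta_t=t/n$ is the success of search with a marking oracle for the top $t$, and that oracle changes with $t$. It also cannot be a cumulative rank law: for $q=1$ and $4\mid n$ it equals $1$ at $t=n/4$ and $0$ at $t=3n/4$. With an actual threshold design, the law is an average over the threshold's rank, and survival of the $t^{2q}$ term is again the unchecked computation.

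\emph{The reduction for $2q>n-1$ fails when $n$ is even and $n/2\le q\le n-2$.} The target degree $n-1$ is then odd, and your even-degree family reaches at most $n-2$. By Corollary~\ref{cor:classical}, a classical law with at most $q$ comparisons has degree at most $q<n-1$. So no mixture has degree $n-1$; already $n=4$, $q=2$ requires degree $3$.

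\textbf{The missing idea is to read the product of two comparison signs with one call.} The paper takes a uniform root $a$ and a uniform $d$-subset of the other labels, $d=\min(2q,n-1)$, and pairs the subset. For a pair $\{u,v\}$, a selector qubit in $\ket+$ chooses the address $(a,u)$ or $(a,v)$, with the response in $\ket-$. One call, then a Hadamard gate on the selector and a measurement, yields $\sgn_{au}\sgn_{av}$ exactly. An odd $d$ costs one more single-sign call, so $\lceil d/2\rceil\le q$ calls suffice. The algorithm outputs $a$ when the product of the $d$ signs is $+1$, and a uniform other label otherwise. This gives $w_t=n^{-1}+(\phi_d(t)-\bar\phi_d)/(2(n-1))$, where $\phi_d(t)$ is the conditional mean of the sign product given $x_a=t$, an alternating hypergeometric sum. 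The alternating signs make all $d+1$ leading terms add with the same sign, so the coefficient of $t^d$ is $2^d/\bigl(2(n-1)\,d!\binom{n-1}{d}\bigr)>0$. This construction treats odd and even $d$ alike and needs no mixing.
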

\begin{proof}
Write $\sgn_{ab}(x)=\sgn(x_a-x_b)=(-1)^{\ind{x_a<x_b}}$ for one orientation of each unordered
pair. In the $\ket\pm$ basis of the response bit, each call acts on the
address sector $(a,b)$ as $\Pi_++\sgn_{ab}\Pi_-$, on the reversed sector $(b,a)$
as $\Pi_+-\sgn_{ab}\Pi_-$, and on a self-address as the identity. Every unnormalized Kraus
trajectory with $k$ calls therefore has amplitudes of degree at most $k$ in
the signs, and input-independent operations do not raise the degree. The
unconditional probability of each output label is a sum over trajectories,
Kraus indices and initial mixtures, so it has degree at most $2q$. Using
$\sgn_{ab}^2=1$, the probability of output $a$ on order $x$ is
\begin{equation}\label{eq:signs}
p_a(x)=\sum_{E:\,|E|\le2q}c_{a,E}\prod_{\{u,v\}\in E}\sgn_{uv}(x),
\end{equation}
which is the polynomial method~\cite{BealsEtAl2001} written in comparison
signs.

Regard a monomial in~\eqref{eq:signs} as a graph $G$ with edge set $E$, add
the output label $a$ as a vertex, and let the component of $a$ contain $k$
other vertices. Let $\mu_j$ be the average of the sign product of a detached
component $C_j$ over its internal orders, and let $\beta_h$ be the average
sign product of the root component over the orders of its vertices in which
exactly $h$ of the other $k$ vertices have a smaller rank than $a$, with
$\beta_0=1$ when $k=0$.
For a uniform order $X$,
\begin{equation}\label{eq:rooted}
\E\Bigl[\prod_{\{u,v\}\in E}\sgn_{uv}(X)\Bigm|X_a=t\Bigr]
=\Bigl(\prod_{j\ge1}\mu_j\Bigr)\,
\frac{\sum_{h=0}^k\beta_h\binom{t-1}{h}\binom{n-t}{k-h}}{\binom{n-1}{k}}.
\end{equation}
To derive~\eqref{eq:rooted}, we group the orders with $X_a=t$ by the rank
sets assigned to the graph components. Within each group, internal
assignments are independent and uniform, with rank $t$ fixed at $a$.
Each detached component contributes its average sign product $\mu_j$.
The root component's other $k$ ranks form a uniform subset of
$[n]\setminus\{t\}$, so the probability that exactly $h$ fall below $t$ is
the displayed hypergeometric weight. The right side of~\eqref{eq:rooted} has
degree at most $k\le\min(|E|,n-1)$ in $t$. The symmetrized law is
$w_t=n^{-1}\sum_a\E[p_a(X)\mid X_a=t]$, which proves the bound.

For attainment, pick a uniform root label $a$ and a uniform $d$-subset of
the other labels, with $d=\min(2q,n-1)$. Pair these labels. For a pair
$\{u,v\}$, prepare a selector qubit in $\ket+$ and the response in $\ket-$,
encode the selector into the addresses $(a,u)$ and $(a,v)$, call the oracle
once and undo the encoding. Apply a Hadamard gate to the selector and measure it to determine the
product $\sgn_{au}\sgn_{av}$ exactly~\cite{CleveEtAl1998}. If $d$ is odd, read the
last sign with one more call. Output $a$ when the product of all $d$ signs is
$+1$, and otherwise output a uniform label other than $a$. With
\[
\phi_d(t)=\frac{\sum_{h=0}^d(-1)^{d-h}\binom{t-1}{h}\binom{n-t}{d-h}}{\binom{n-1}{d}},
\qquad
\bar\phi_d=\begin{cases}1/(d+1),&d\text{ even},\\0,&d\text{ odd},\end{cases}
\]
the rank law is
$w_t=n^{-1}+\bigl(\phi_d(t)-\bar\phi_d\bigr)/\bigl(2(n-1)\bigr)$. For $d\ge1$
its coefficient of $t^d$ is $2^d/\bigl(2(n-1)\,d!\binom{n-1}{d}\bigr)>0$, and
for $q=0$ the law is uniform, of degree zero. The algorithm uses
$\lceil d/2\rceil\le q$ calls.
\end{proof}

At $n=3$ and $q=1$ the construction gives the rank law
$(1/2,\,0,\,1/2)$, which has degree two and so cannot come from one
classical comparison. By Corollary~\ref{cor:classical} and
Theorem~\ref{thm:degree}, one quantum comparison permits rank laws of degree
two, whereas one classical comparison permits degree one. The larger degree
is necessary for a quantum gain at one comparison. A nonnegative rank law of
degree at most one is affine in the rank and sums to one, so it is a mixture
of the law~\eqref{eq:classical-law} with $q=1$ and its reversal, and its
top-$r$ mass is at most $C_1(n,r)$ for every $r$. For $r=1$ this reads
$n\,w_1\le2=n\,C_1(n,1)$.

However, the degree bound does not determine the optimal alignment
probability. One quantum comparison and two classical comparisons share
the degree bound two and have different values (Section~\ref{sec:one}).
Determining the alignment optimum therefore requires optimization over
rank laws that admissible quantum algorithms can realize.

\section{The exact optimum with one quantum comparison}\label{sec:one}

To determine this optimum, we need an attaining algorithm and an upper
bound that applies to every admissible one-comparison strategy. The
following theorem supplies both and expresses their common value through
an explicit cubic.

\begin{theorem}\label{thm:one}
$Q_1(2,1)=1$. For $n\ge3$ and $1\le r<n$, put $m=n-1$, $k=n-2$ and $h=n-2r$,
and let $\zeta$ be the smallest positive root of
\begin{equation}\label{eq:cubic}
F_{n,r}(z)=2nhz^3+3(nk-h^2)z^2-4(2n-3)hz-6mk.
\end{equation}
Then
\begin{equation}\label{eq:one}
Q_1(n,r)=\frac rn+\frac{r(n-r)}{n(n-1)}\,\zeta .
\end{equation}
For $r<n/2$ the positive root is unique and lies in $(1,3)$. For $r=n/2$ it
is $\sqrt{2(n-1)/n}$. For $n/2<r<n-1$ it is the unique root in
$\bigl(1,(n-2)/(2r-n)\bigr)$, and for $r=n-1$ it is $1$. An explicit initial state and output measurement attain this value for every
unknown order.
\end{theorem}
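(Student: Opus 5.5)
By Proposition~\ref{prop:twirl} it suffices to maximize the uniform-average success $\frac1{n!}\sum_x\Pr_x[x_I\le r]$ over one-call algorithms. The plan is to reduce this to an explicit finite-dimensional eigenvalue problem that is symmetric under relabeling, and then to solve that problem exactly.

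A one-call algorithm consists of a fixed initial state $\ket\psi$ on address, response and workspace registers, one call $O_x$, and a POVM $\{M_a\}$ with one outcome per label. As in the proof of Theorem~\ref{thm:degree}, we write the response bit in the $\ket\pm$ basis. The $\ket+$ component is unaffected by the call. On the $\ket-$ component the call multiplies address $(i,j)$ by the sign $s_{ij}(x)=\sgn(x_i-x_j)$.

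Purify and absorb the input-independent unitaries. The final state is then
\[
\ket{\psi_x}=\sum_{i,j}s_{ij}(x)\ket{i,j}\ket{-}\ket{\phi_{ij}}+\ket{\chi},
\]
where $\ket{\chi}$ collects the $\ket+$ part and does not depend on $x$.

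Form the Gram matrix of the vectors $\{\ket\chi\}\cup\{\ket{i,j,-,\phi_{ij}}\}$. The success probability becomes a semidefinite program in the style of~\cite{BarnumSaksSzegedy2003}. Its variables are a Gram matrix $\Gamma\succeq0$ indexed by $\{0\}\cup\{(i,j)\}$, with unit trace. Its objective is
\[
\tfrac1{n!}\sum_x\sum_a \mathbf 1[x_a\le r]\,v_x^{\!\top}\Gamma_a v_x,
\]
where $v_x=(1,s_{ij}(x))$ and the $\Gamma_a\succeq0$ sum to $\Gamma$.

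This problem is invariant under the symmetric group acting on labels. Averaging therefore lets us take $\Gamma$ invariant and $\Gamma_a$ covariant, so $\Gamma_a$ is the image of $\Gamma_1$ under relabeling. Equivalently, we may assume the output measurement is a pretty-good or projective measurement of the label register. The value then reduces to a largest-eigenvalue problem on a small space of symmetric functions.

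Concretely, following the proof of Theorem~\ref{thm:degree}, the relevant sign functions attached to the output label $a$ are only a few. They are the constant $1$; the linear functions $\sum_u s_{au}$, which encode the rank of $a$; and the quadratic terms $s_{au}s_{av}$ and $s_{uv}$. Using~\eqref{eq:rooted} with $k\le2$, we compute the $r$-dependent objective matrix in this basis. We expect it to be a $3\times3$ or $4\times4$ matrix with entries rational in $n$ and $r$, built from the hypergeometric averages of $\mathbf 1[t\le r]$ against $1$, $\binom{t-1}{1}$, and the degree-two weights. The top eigenvalue of the corresponding pencil, taken relative to the normalization Gram matrix, should satisfy a characteristic equation that, after substitution, is the cubic $F_{n,r}$. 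The offset $r/n$ is the blind contribution. The factor $r(n-r)/(n(n-1))$ is the variance of the indicator $\mathbf 1[x_a\le r]$, with $\zeta$ the normalized eigenvalue.

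For the \emph{upper certificate} we would exhibit a dual feasible solution, namely a matrix $Y$ with $Y\succeq$ each rank-one contribution, and check that its value matches. The \emph{attaining algorithm} comes from the top eigenvector: an initial state that superposes a self-address or $\ket+$ part with addresses $(a,u)$ in appropriate proportions, followed by a label measurement. We then check directly that its success is the same on every order once the relabeling twirl is applied. The small cases are handled separately. For $n=2$, querying $(1,2)$ gives $Q_1(2,1)=1$ classically. For $r=n-1$ we have $\zeta=1$, which is consistent with $C_1$ being optimal there.

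The main obstacle is the \emph{reduction step}. We must justify that the optimal one-call strategy lives in the small symmetric sector, including arbitrary workspace and POVMs, and then verify the root location. For root location we plan to show three facts: $F_{n,r}(0)=-6mk<0$; $F_{n,r}$ has the stated sign changes at $1$, at $3$, and at $(n-2)/(2r-n)$; and the eigenvalue branch is the smallest positive root. The case $h=0$ collapses $F_{n,r}$ to $3nkz^2-6mk$, which gives $\zeta=\sqrt{2(n-1)/n}$ and serves as a consistency check.
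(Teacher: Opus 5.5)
\textbf{Gap: the semidefinite program is missing its key constraint.} The post-call components $\ket\chi$ and $\ket{i,j,-}\ket{\phi_{ij}}$ lie on distinct address/response sectors, so their Gram matrix is \emph{diagonal}. Completeness of the output POVM then forces $\sum_a\Gamma_a=\diag(\lambda)$ with $\lambda\ge0$ and $\sum_\mu\lambda_\mu=1$; this is the paper's $\sum_iB_i=\diag(\lambda)$. Your program keeps only $\Gamma\succeq0$ and $\tr\Gamma=1$. Its value is then $\lambda_{\max}(M_1)$ with $M_1=\E_x[\ind{x_1\le r}v_xv_x^{\mathsf T}]$, and this value is trivial. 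For $r=1$, let $w$ have entry $n^{-1/2}$ at index $0$ and $-n^{-1/2}$ at each pair $(1,u)$. Every order with $x_1=1$ gives $w^{\mathsf T}v_x=\sqrt n$, so $\Gamma_1=ww^{\mathsf T}$ with $\Gamma_a=0$ for $a\ne1$ has objective $1$. Up to orientation of pairs, this star vector is the top eigenvector of $M_1$. It is exactly the ``$\ket+$ part plus addresses $(a,u)$'' state you propose, so your largest-eigenvalue reduction solves the relaxation and certifies nothing.

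With the constraint restored, averaging equalizes the edge weights. Because $\R\ket0$, $\mathcal V$ and $\mathcal C$ are inequivalent irreducible $S_n$-sectors, Schur's lemma reduces $\sum_aB_a=\diag(\lambda)$ to sector-trace conditions. In the paper's coordinates the problem becomes $\max\{\tr(M_1B):B\succeq0,\ \tr B=1,\ k\tr(P_VB)=2\tr(P_CB)\}$, with dual $\min_\mu\lambda_{\max}\bigl(M_1-\mu(kP_V-2P_C)\bigr)$. This dual is the paper's certificate $\mathcal L=QP_0+\xi P_V+\ell_CP_C$ with constant diagonal $Q$, since $2(\xi-Q)+k(\ell_C-Q)=0$. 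The linear constraint leaves only a fraction $2/n$ of the edge weight in $\mathcal V$, and $M_1$ compresses to $(p/3)I$ on $\mathcal C$. That is what brings the value for $r=1$ down to about $3.30/n$.

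\textbf{The computational core is only anticipated, and its expected shape is off.} One call makes the amplitudes linear in the signs, so the working space is $\R\ket0\oplus\Lambda^2\R^n$, not a span containing $s_{au}s_{av}$. The degree-two expansion behind Theorem~\ref{thm:degree} describes probabilities and, as the paper notes, does not decide attainability. Under the stabilizer of label $1$, $M_1$ splits into three pieces (Lemma~\ref{lem:moments}), whose entries must actually be computed:
\begin{itemize}
\item a $2\times2$ block on $\ket0,\,b\wedge\omega$;
\item $k$ copies of a $2\times2$ block on $f\wedge\omega,\,b\wedge f$ with $f\in W$;
\item the scalar $p/3$ on $\Lambda^2W$.
\end{itemize}
The cubic is not the characteristic polynomial of a single pencil. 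It is the condition that $\mathcal L-M_1$ be singular on \emph{both} $2\times2$ block types while $nQ=2\xi+k\ell_C$, i.e., a crossing of two eigenvalue branches in $\mu$. The root isolation is what guarantees $3k+2h\zeta>0$ and $R>0$; without these, neither the certificate nor the attaining construction exists.

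The construction is read off the two kernel directions $(1,-\rho)$ and $(\gamma,1)$. It puts weight $\alpha$ on $\ket{1,1,+}$ and equal weight on all $\binom n2$ pairs, followed by a covariant POVM whose completeness is checked sector by sector. A star-supported state or a pretty-good measurement is not justified.

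\textbf{Two small corrections.} On a self-address the call has phase $+1$, not $\sgn(0)=0$, so those terms belong to $\ket\chi$. Also, $r(n-r)/(n(n-1))$ is $C_1(n,r)-r/n$, not the variance of $\ind{x_a\le r}$. Your root-location checks do match the paper's.
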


\begin{corollary}\label{cor:one}
For $n\ge2$,
\[
Q_1(n,1)=\frac{3(n-1)+\sqrt{13n^2-18n+9}}{2n^2},
\qquad\text{and}\qquad
Q_1(n,n/2)=\frac12+\sqrt{\frac{n}{8(n-1)}}
\]
for even $n$. The equivalent blind-pick size $n\,Q_1(n,1)$ increases with $n$
toward $(3+\sqrt{13})/2\approx3.3028$, and for $n\ge3$,
$n\,Q_1(n,1)>3=n\,C_2(n,1)$ exactly when $n>9$, with equality at $n=9$.
\end{corollary}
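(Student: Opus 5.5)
The plan is to specialize Theorem~\ref{thm:one} to $r=1$ and to $r=n/2$, where the cubic \eqref{eq:cubic} can be solved in closed form. The remaining claims then reduce to elementary one-variable inequalities. The value $Q_1(2,1)=1$ is given directly by Theorem~\ref{thm:one}, and we will check at the end that both closed forms return $1$ at $n=2$. So we may assume $n\ge3$, where \eqref{eq:one} applies.

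\emph{Case $r=1$.} Here $h=k=n-2$ and $m=n-1$, so $nk-h^2=2(n-2)$. Dividing \eqref{eq:cubic} by $n-2>0$ gives
\[
2nz^3+6z^2-4(2n-3)z-6(n-1).
\]
Substituting $z=-1$ gives $-2n+6+8n-12-6n+6=0$, so this polynomial factors as $(z+1)\bigl(2nz^2-2(n-3)z-6(n-1)\bigr)$. The quadratic factor has negative constant term, so it has exactly one positive root. Its discriminant is $4(n-3)^2+48n(n-1)=4(13n^2-18n+9)$. Hence
\[
\zeta=\frac{n-3+\sqrt{13n^2-18n+9}}{2n}.
\]
With $r=1$, \eqref{eq:one} reads $Q_1(n,1)=(1+\zeta)/n$, which simplifies to the stated formula. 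At $n=2$ the formula gives $(3+5)/8=1$, in agreement with $Q_1(2,1)=1$.

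\emph{Case $r=n/2$.} Here $h=0$, so the cubic reduces to $3nkz^2-6mk$. For $n\ge3$ we have $k>0$, and Theorem~\ref{thm:one} already gives $\zeta=\sqrt{2(n-1)/n}$. Then \eqref{eq:one} becomes
\[
Q_1(n,n/2)=\frac12+\frac{n}{4(n-1)}\sqrt{\frac{2(n-1)}{n}}=\frac12+\sqrt{\frac{n}{8(n-1)}}.
\]
At $n=2$ the right side equals $1=Q_1(2,1)$, so the formula holds for every even $n$.

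\emph{Monotonicity, limit and comparison.} Write $u=1/n\in(0,1]$, so that
\[
nQ_1(n,1)=g(u)=\tfrac12\bigl(3-3u+\sqrt{13-18u+9u^2}\bigr).
\]
Its derivative is $g'(u)=\tfrac12\bigl(-3+(9u-9)/\sqrt{13-18u+9u^2}\bigr)<0$ on $(0,1]$. Hence $nQ_1(n,1)$ increases strictly in $n$, and it tends to $g(0)=(3+\sqrt{13})/2$. By Theorem~\ref{thm:classical} with $q=2$ and $n\ge3$, $nC_2(n,1)=3$. Thus $nQ_1(n,1)>3$ if and only if $\sqrt{13n^2-18n+9}>3n+3$. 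Both sides are positive, so we may square, and this becomes $4n^2>36n$, that is, $n>9$. The same computation gives equality exactly at $n=9$.

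The only nonroutine step is spotting the root $z=-1$ of the $r=1$ cubic. After that, everything is algebra plus a sign check of $g'$, so I do not expect a genuine obstacle.
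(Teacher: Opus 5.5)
Your proposal is correct and follows the paper's own argument: the same factorization $F_{n,1}(z)=2k(z+1)\bigl[nz^2+(3-n)z-3m\bigr]$ for $r=1$, the same reduction to $3nkz^2-6mk$ at $h=0$, the same sign check of the derivative in $u=1/n$, and the same squaring step that reduces the comparison with $C_2(n,1)=3/n$ to $4n(n-9)>0$. Your explicit check that both closed forms equal $1$ at $n=2$ is a small addition that the paper leaves implicit.
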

\begin{proof}
The closed forms are proved in Appendix~\ref{app:one}. With $u=1/n$,
$n\,Q_1(n,1)=\bigl(3-3u+\sqrt{13-18u+9u^2}\bigr)/2$. Its derivative in $u$ is
$-3/2+(9u-9)/\bigl(2\sqrt{13-18u+9u^2}\bigr)<0$ for $0<u\le1/2$, so the size
increases with $n$, and its value at $u=0$ is $(3+\sqrt{13})/2$. For the
comparison with $C_2(n,1)=3/n$ from Theorem~\ref{thm:classical},
$3(n-1)+\sqrt{13n^2-18n+9}>6n$ is equivalent to $13n^2-18n+9>9(n+1)^2$, that
is to $4n(n-9)>0$.
\end{proof}

The classical value with one comparison is $C_1(n,r)=r/n+r(n-r)/(n(n-1))$ by
Theorem~\ref{thm:classical}. It is~\eqref{eq:one} with $\zeta$ replaced by
$1$. Since $\zeta\ge1$, with equality only at $r=n-1$, one quantum comparison
is strictly better than one classical comparison for $1\le r\le n-2$.

The same formula gives the limiting alignment probability when the good-enough
set occupies a fixed fraction of the designs.
\begin{corollary}\label{cor:fraction}
Fix $0<\varrho<1$ and put $r_n=\lfloor\varrho n\rfloor$. Then
\[
\lim_{n\to\infty}Q_1(n,r_n)=F_1(\varrho)
=\varrho+\varrho(1-\varrho)\zeta(\varrho),
\]
where $\zeta(\varrho)$ is the smallest positive root of
\begin{equation}\label{eq:fraction-cubic}
H_{\varrho}(z)=2(1-2\varrho)z^3+12\varrho(1-\varrho)z^2
-8(1-2\varrho)z-6=0.
\end{equation}
In particular, $F_1(1/2)=1/2+\sqrt2/4$ and
$\lim_{\varrho\downarrow0}F_1(\varrho)/\varrho=(3+\sqrt{13})/2$.
\end{corollary}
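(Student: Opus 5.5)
The plan is to pass to the limit directly in Theorem~\ref{thm:one}. Put $r=r_n=\lfloor\varrho n\rfloor$. Then $r/n\to\varrho$, $r(n-r)/(n(n-1))\to\varrho(1-\varrho)$, and $1\le r_n\le n-1$ for large $n$. By~\eqref{eq:one} it therefore suffices to show that the smallest positive root $\zeta_n$ of $F_{n,r_n}$ converges to the smallest positive root $\zeta(\varrho)$ of $H_\varrho$.

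\emph{Coefficient convergence.} Since $h/n\to1-2\varrho$ and $m/n,k/n\to1$, dividing~\eqref{eq:cubic} by $n^2$ gives the following coefficient limits:
\begin{itemize}
\item $2nh/n^2\to2(1-2\varrho)$;
\item $3(nk-h^2)/n^2\to3\bigl(1-(1-2\varrho)^2\bigr)=12\varrho(1-\varrho)$;
\item $4(2n-3)h/n^2\to8(1-2\varrho)$;
\item $6mk/n^2\to6$.
\end{itemize}
Hence $F_{n,r_n}(z)/n^2\to H_\varrho(z)$ uniformly on compact sets.

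\emph{Simple sign change of $H_\varrho$.} The convergence of roots will follow from the intermediate value theorem once $H_\varrho$ has a simple sign change at $\zeta(\varrho)$ inside an interval that matches the localization in Theorem~\ref{thm:one}. I would check the three cases separately, starting from $H_\varrho(0)=-6$ and $H_\varrho(1)=-12(1-\varrho)^2<0$.
\begin{itemize}
\item \emph{Case $\varrho<1/2$.} The coefficient signs are $+,+,-,-$, so Descartes' rule gives exactly one positive root, and it is simple. Since $H_\varrho(3)=30(1-2\varrho)+108\varrho(1-\varrho)-6>0$, this root lies in $(1,3)$, matching the interval in which $\zeta_n$ is the unique positive root.
\item \emph{Case $\varrho=1/2$.} Here $H_\varrho=3z^2-6$, with simple root $\sqrt2$. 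This agrees with the limit of $\sqrt{2(n-1)/n}$, or of the root for $r$ near $n/2$.
\item \emph{Case $\varrho>1/2$.} Write $a=2\varrho-1\in(0,1)$, so that $12\varrho(1-\varrho)=3(1-a^2)$. Then $H_\varrho(1/a)=1/a^2-1>0$. The endpoint $(n-2)/(2r_n-n)$ of Theorem~\ref{thm:one} tends to $1/a$. With the negative leading coefficient, there are two sign changes, so Descartes' rule allows at most two positive roots. The signs at $0$, $1/a$ and $+\infty$ force one root in $(1,1/a)$ and one above $1/a$, so both are simple.
\end{itemize}

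\emph{Convergence of $\zeta_n$.} In each case, fix $\varepsilon>0$ small. Then $H_\varrho$ is negative on $[1,\zeta(\varrho)-\varepsilon]$ and positive at $\zeta(\varrho)+\varepsilon$. By uniform convergence the same signs hold for $F_{n,r_n}$ for large $n$, and the endpoint intervals of Theorem~\ref{thm:one} eventually contain $[1,\zeta(\varrho)+\varepsilon]$. Together with the uniqueness of $\zeta_n$ in that interval, this gives $|\zeta_n-\zeta(\varrho)|<\varepsilon$.

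For $\varrho$ close to $1/2$, $r_n$ may fall on either side of $n/2$. Those subsequences are handled by the corresponding case of Theorem~\ref{thm:one}, since the limiting simple root $\sqrt2$ of $3z^2-6$ is the same. I expect this case bookkeeping, namely aligning the finite-$n$ root intervals with the limit intervals, to be the main obstacle. The rest is routine.

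\emph{Special values.} For $\varrho=1/2$ the formula gives $F_1(1/2)=1/2+\tfrac14\sqrt2$. For $\varrho\downarrow0$, $H_0(z)=2(z+1)(z^2-z-3)$, whose positive root $(1+\sqrt{13})/2$ is simple. By the same continuity argument applied to $\varrho\mapsto H_\varrho$, $\zeta(\varrho)\to(1+\sqrt{13})/2$. Hence
\[
F_1(\varrho)/\varrho=1+(1-\varrho)\zeta(\varrho)\to(3+\sqrt{13})/2,
\]
which matches Corollary~\ref{cor:one}.
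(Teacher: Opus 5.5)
Your proposal is correct and follows essentially the paper's argument: coefficientwise convergence of $F_{n,r_n}/n^2$ to $H_\varrho$, the root localization of Theorem~\ref{thm:one} in the three cases, Descartes' rule, and the sign $H_\varrho(1/a)=a^{-2}-1>0$ for $\varrho>1/2$. The one difference is how you pin down $\zeta_n$: you use a sign change at the simple limiting root, whereas the paper uses boundedness of $\zeta_n$ and the fact that every subsequential limit is a root of $H_\varrho$ in an interval ($[1,3]$ or $[1,1/a]$) that contains only one root; also, your concern about $r_n$ falling on both sides of $n/2$ does not arise, since for $\varrho=1/2$ one always has $r_n\le n/2$ (so $h\in\{0,1\}$), and for $\varrho\ne1/2$ the side is eventually fixed.
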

\begin{proof}
Write $r_n=\varrho n+e_n$, where $-1<e_n\le0$. Dividing
$F_{n,r_n}$ in~\eqref{eq:cubic} by $n^2$ gives coefficientwise convergence
to $H_{\varrho}$, since every remaining term has a bounded coefficient
times $1/n$ or $1/n^2$. Let $\zeta_n$ be the root selected in
Theorem~\ref{thm:one}. If $\varrho<1/2$, then $1<\zeta_n<3$ for all
sufficiently large $n$. The signs of the coefficients of $H_{\varrho}$
are $+,+,-,-$, so it has exactly one positive root. Every subsequential
limit of $\zeta_n$ is a root of $H_{\varrho}$ in $[1,3]$, and thus the
whole sequence converges to that positive root. At $\varrho=1/2$, the
same boundedness holds for both parities of $n$, and the limiting
polynomial is $3z^2-6$. Its only root in $[1,3]$ is $\sqrt2$.

For $\varrho>1/2$, put $a=2\varrho-1\in(0,1)$. Theorem~\ref{thm:one}
gives $1<\zeta_n<(n-2)/(2r_n-n)$ for all sufficiently large $n$.
The upper endpoint tends to $1/a$. Moreover,
$H_{\varrho}(1)=-12(1-\varrho)^2<0$ and
$H_{\varrho}(1/a)=a^{-2}-1>0$. The negative leading coefficient gives
a second positive root above $1/a$, and Descartes' rule permits at most
two. There is therefore exactly one root in $[1,1/a]$, and it is the
smallest positive root. Boundedness and coefficient convergence again
give $\zeta_n\to\zeta(\varrho)$. Substitution in~\eqref{eq:one} proves
the asserted limit for every $0<\varrho<1$.

The value at $1/2$ follows from $\zeta(1/2)=\sqrt2$. As
$\varrho\downarrow0$, the selected root remains in $[1,3]$, and
$H_0(z)=2(z+1)(z^2-z-3)$ has only one root in this interval.
Consequently $\zeta(\varrho)\to(1+\sqrt{13})/2$, which gives the stated
small-fraction limit.
\end{proof}

The proof of Theorem~\ref{thm:one} in Appendix~\ref{app:one} uses the same moment matrices for the
upper bound and the attaining construction. The state after one
call is a linear image of the feature vector
$\phi_x=\ket0+\sum_{i<j}\sgn(x_i-x_j)\,e_i\wedge e_j$, so the success of any
algorithm is $\sum_i\tr(M_iB_i)$ for target-conditioned moment matrices
$M_i$ and positive semidefinite matrices $B_i$ whose sum is a nonnegative diagonal matrix of
trace one. This is the
one-query case of the semidefinite characterization of quantum query
algorithms~\cite{BarnumSaksSzegedy2003}, reduced by label symmetry. The moment matrices
$M_i$ decompose into explicit $2\times2$ blocks under label
permutations. A dual matrix $\mathcal L\succeq M_i$ with constant diagonal
then gives an upper bound for every algorithm, and an explicit covariant state and measurement
attain the bound. The value $\zeta$ is the root at which the diagonal of
$\mathcal L$ becomes constant.

\section{Comparison in ordinal-optimization terms}\label{sec:oo}

We use the equivalent blind-pick size to compare the exact values in OO
terms. We first consider selection of the best design, then examine how
the comparison changes when more designs are accepted as good enough.
Table~\ref{tab:seq} reports the sizes $n\,C_q(n,1)$
and $n\,Q_1(n,1)$ for selecting the best of $n$ designs. The classical values
are exact for every $n\ge q+1$ (Theorem~\ref{thm:classical}). The quantum
values follow from Corollary~\ref{cor:one} and are rounded to four decimals.
The quantum entries at $n=2$ and $n=9$ are displayed exactly. The value
at $n=3$ is $1+\sqrt2$. For one classical comparison the size is $2$
for every $n\ge2$ (Corollary~\ref{cor:classical}). For one quantum comparison
it increases from $2$ toward about $3.30$ as $n$ grows
(Corollary~\ref{cor:one}). The values coincide at $n=2$, because one comparison
orders two designs. One quantum comparison has a larger equivalent
blind-pick size than two classical comparisons exactly when $n\ge10$. The classical
entry with two comparisons is omitted at $n=2$, where one comparison already identifies
the best design.

\begin{table}[htbp]\centering
\caption{Equivalent blind-pick sizes for selecting the best design.}
\label{tab:seq}
\small
\begin{tabular}{cccc}\toprule
$n$ & classical, $q=1$ & classical, $q=2$ & quantum, $q=1$\\\midrule
2 & 2 & & 2\\
3 & 2 & 3 & 2.4142\\
4 & 2 & 3 & 2.6302\\
7 & 2 & 3 & 2.9145\\
9 & 2 & 3 & 3\\
10 & 2 & 3 & 3.0300\\
20 & 2 & 3 & 3.1659\\
100 & 2 & 3 & 3.2753\\
$\infty$ & 2 & 3 & 3.3028\\\bottomrule
\end{tabular}
\end{table}

For a fixed pool of seven designs, we next vary the good-enough target.
Table~\ref{tab:seven} reports alignment probabilities for the top $r$,
for $1\le r\le6$. It also gives the equivalent blind-pick
size $s_{\rm eq}^{Q}(7,r)$ of one quantum comparison. The blind pick
selects one design. Classical values follow from Theorem~\ref{thm:classical}
and quantum values from Theorem~\ref{thm:one}. Decimal values are rounded
to four places. The equivalent blind-pick size of one classical comparison
is $2$ for every $r$ in the table (Corollary~\ref{cor:classical}).
For $r\le5$, the probability with one quantum
comparison is strictly between those with one and two classical comparisons.
An alignment probability of $0.8$ for the best three of seven requires two
classical comparisons, since $C_1(7,3)=5/7$, whereas one quantum comparison
suffices. The same holds for an alignment probability of $0.9$ for the best
four, since $C_1(7,4)=6/7$.

\begin{table}[htbp]\centering
\caption{Selection performance for seven designs.}
\label{tab:seven}
\small
\begin{tabular}{cccccc}\toprule
& blind & classical & quantum & classical & size\\
$r$ & pick & $q=1$ & $q=1$ & $q=2$ & $s_{\rm eq}^{Q}(7,r)$\\\midrule
1 & 0.1429 & 0.2857 & 0.4164 & 0.4286 & 2.9145\\
2 & 0.2857 & 0.5238 & 0.6769 & 0.7143 & 2.7826\\
3 & 0.4286 & 0.7143 & 0.8314 & 0.8857 & 2.6151\\
4 & 0.5714 & 0.8571 & 0.9199 & 0.9714 & 2.4157\\
5 & 0.7143 & 0.9524 & 0.9705 & 1 & 2.1953\\
6 & 0.8571 & 1 & 1 & 1 & 2\\\bottomrule
\end{tabular}
\end{table}

In OO applications the good-enough set is usually a small fraction of many
designs. To examine this regime, Table~\ref{tab:thousand} reports alignment
probabilities for the top $r$ of $n=1000$ designs and the equivalent
blind-pick size $s_{\rm eq}^{Q}(1000,r)$ of one quantum comparison.
Values are rounded to four decimals. The equivalent blind-pick size of
one quantum comparison is about $3.27$ for the top
$5$ percent, $3.23$ for the top $10$ percent and $3.14$ for the top
$20$ percent. Each exceeds the size $3$ of two classical comparisons.
For the top half the quantum size is $2.77$, below $3$. In the cases computed,
the equivalent size decreases as the good-enough set grows, so the quantum
gain in these units is largest for selecting the best design.

\begin{table}[htbp]\centering
\caption{Selection performance for $1000$ designs.}
\label{tab:thousand}
\small
\begin{tabular}{cccccc}\toprule
& good-enough & classical & quantum & classical & size\\
$r$ & fraction & $q=1$ & $q=1$ & $q=2$ & $s_{\rm eq}^{Q}(1000,r)$\\\midrule
1 & 0.1\% & 0.0020 & 0.0033 & 0.0030 & 3.3000\\
10 & 1\% & 0.0199 & 0.0326 & 0.0297 & 3.2939\\
50 & 5\% & 0.0975 & 0.1544 & 0.1428 & 3.2655\\
100 & 10\% & 0.1901 & 0.2885 & 0.2712 & 3.2265\\
200 & 20\% & 0.3602 & 0.5038 & 0.4884 & 3.1364\\
500 & 50\% & 0.7503 & 0.8537 & 0.8754 & 2.7698\\\bottomrule
\end{tabular}
\end{table}

For a fixed good-enough fraction, Corollary~\ref{cor:fraction} gives a
comparison that no longer depends on the number of designs.
The classical limit is $\lim_{n\to\infty}C_1(n,\lfloor\varrho n\rfloor)
=1-(1-\varrho)^2$, by Theorem~\ref{thm:classical}.
Table~\ref{tab:fraction} compares this value with $F_1(\varrho)$.
At a good-enough fraction of $5$ percent, the limiting quantum success
probability is about $0.1543$, close to the finite value $0.1544$ for
$1000$ designs in Table~\ref{tab:thousand}.
\begin{table}[ht]
\centering
\caption{Fixed-fraction limits.}\label{tab:fraction}
\begin{tabular}{rrr}\toprule
Good-enough fraction & Classical & Quantum\\\midrule
$1\%$ & 0.019900 & 0.032583\\
$5\%$ & 0.097500 & 0.154311\\
$10\%$ & 0.190000 & 0.288343\\
$20\%$ & 0.360000 & 0.503566\\
$50\%$ & 0.750000 & 0.853553\\\bottomrule
\end{tabular}
\end{table}

With exact comparisons and one selected design, no adaptive choice of pairs
improves on a blind pick of $q+1$ designs followed by a knockout. The budget needed for
a target probability therefore follows from~\eqref{eq:blind}. If the exact
comparison oracle can be queried coherently, the equivalent blind-pick size
of one quantum comparison for selecting the best design approaches about
$1.65$ times the size of one classical comparison as $n$ grows. In the cases
computed this ratio decreases when the good-enough set is a larger fraction
of the designs. It is about $1.63$ for the top $5$ percent of $1000$ designs
and $1.38$ for the top half. These ratios measure gains per comparison
under the exact oracle model. Their computational significance depends on
how the comparisons are implemented.

\section{Discussion and future work}\label{sec:program}

The results above apply to the instance of QOO with exact comparisons and
one selected design. The following questions consider other instances of
Definition~\ref{def:qoo} with larger quantum budgets, comparisons inferred
from evaluations and selection of multiple designs. They also address the cost
of implementing coherent comparisons and checking feasibility. These
changes affect the information available, the success criterion or
the cost assigned to a query, so each extension requires a common
information and cost model for classical and quantum procedures.

\subsection{Larger comparison budgets}\label{sec:program-budgets}
The two-comparison problem is studied in a companion paper~\cite{WuTwoComparisons2026},
which gives a complete semidefinite characterization of attainable success.
Computer-assisted proofs determine the optimum for seven designs and every
good-enough set, and for selecting the best of eight designs.
In particular, $Q_2(7,1)\approx0.7166$ and $Q_2(8,1)\approx0.6586$.
For best-design selection, $nQ_2(n,1)$ is nondecreasing and bounded for
$n\ge7$, with a limit in $(7.43710307,7.43815019]$.
This limit is strictly greater than that of the explicit algorithm family
constructed in the companion paper. Two quantum comparisons beat six
classical comparisons exactly when $n\ge46$.
The values for general $n$ and $r$ and the precise scaled limit remain open.
Although Theorem~\ref{thm:degree} restricts the possible rank laws, it does
not by itself characterize attainability. The two-comparison
characterization resolves this issue at that budget, while explicit
reductions and optimal values for higher budgets remain to be determined.

For every budget $q\ge1$, the search lower bound~\cite{Zalka1999,BennettEtAl1997} gives $n\,Q_q(n,1)=O(q^2)$ by a
reduction. Rank a single marked item first and order the others by label. A
comparison of distinct labels $i$ and $j$ is then true exactly when $i$ is
marked, or when neither is marked and $i<j$, which four search queries
compute and uncompute. Minimum finding~\cite{DurrHoyer1996} reaches the minimum of a set of designs after an expected number of comparisons of the
order of the square root of their number. Stopped at a constant multiple of
that expectation, it has found the minimum with probability bounded below by a constant,
by Markov's inequality, so applied to a blind pick of order $q^2$ designs it achieves $n\,Q_q(n,1)$ of order $q^2$
when $n$ is at least of that order. The exact values and the sharp constants in these
bounds are open.

These search bounds are for selecting the best design. For OO, larger
budgets should also be studied when the good-enough set
remains a fixed fraction of all designs as their number grows. The
one-comparison values in Section~\ref{sec:oo} show that the relative gain in
blind-pick units depends on this fraction. Determining that dependence at
larger budgets would show when accepting more designs as good enough
changes the benefit of quantum comparisons.

\subsection{Evaluation and selection models}\label{sec:program-models}
The preceding questions keep comparisons exact. In simulation-based OO,
however, comparisons are inferred from observations.
Their error probabilities can depend on the performance gaps, so the
ordered performance curve affects the alignment
probability~\cite{LauHo1997}. Asymptotic query counts for best-arm
identification depend on reward
gaps~\cite{WangYouLiChilds2021,BuchholzKublerScholkopf2025}. The corresponding
question here is the largest alignment probability at a fixed budget,
under a specified model of comparison errors.

When comparisons are computed from quantum estimates of design
means~\cite{Montanaro2015,Hamoudi2021}, the budget counts calls to
the underlying evaluation oracle. The estimation precision and the number
of comparisons then have to be chosen together. The problem is to allocate
that budget so as to maximize alignment probability, a quantum counterpart
of the allocation question addressed by
OCBA~\cite{ChenLinYucesanChick2000}.

An evaluation model can also change what one oracle call answers. A
membership query tests whether a single design belongs to the good-enough
set instead of comparing two designs. The fixed-budget values of this model
can be studied when the number of good designs is
unknown~\cite{BoyerBrassardHoyerTapp1998,ChakrabortyEtAl2005}.

Beyond changes to the evaluation oracle, the output criterion can also
change. This paper selects one design and
uses alignment level one. Selecting a set with a prescribed alignment
level introduces the selected-set size as a further parameter. A
fixed-confidence quantum algorithm already identifies a given number of
best arms as a set through a reward oracle~\cite{WangEtAl2025QCE}. The
problem for QOO is to determine the best alignment probability under a
fixed comparison budget and the corresponding exact classical benchmark.

\subsection{Computational cost and constrained selection}\label{sec:program-cost}
The evaluation-oracle budget above counts queries rather than total
computation. Computational cost also includes the implementation of a coherent
comparison, its workspace and the error-correction overhead~\cite{BabbushEtAl2021}.
The practical question is whether the gain
in alignment probability justifies that cost for a specified evaluation
model and target probability.

For constrained selection, the cost model also includes feasibility
checks. In constrained OO for unit commitment, the selection criterion
combines feasibility with sufficiently low cost~\cite{WuShahidehpour2014}.
The corresponding QOO formulation specifies how feasibility is checked,
what information a query returns and how those checks are charged. A performance comparison
alone does not supply this information.

Power system screening with deterministic, expensive evaluations on a
fixed set of scenarios can provide a first test of this cost model.
Classical and quantum procedures would use the same candidates,
feasibility rules and target alignment probability. Measuring evaluation,
comparison and selection costs would assess the total cost of reaching
that target, alongside the alignment probability per query.

\section{Conclusion}\label{sec:conclusion}

We formulated quantum ordinal optimization as a class of problems that
maximize worst-case alignment probability at a fixed oracle budget. We
studied the instance with exact pairwise comparisons, selected-set size one
and alignment level one. For exact classical comparisons and a budget smaller than the number of
designs, uniform sampling without replacement followed by a knockout is
optimal. The sample contains one more design than the budget. With quantum comparisons, the rank law
averaged over a random order can have twice the classical degree.
We determined the optimum with one comparison from the root of a cubic.
These values quantify the benefit of coherent comparison access
relative to the optimal classical procedure, and known search bounds show
that the quantum-to-classical ratio in blind-pick units grows at most in
proportion to the budget for the best design. To determine how that benefit
changes beyond the present model, we identify
questions about larger budgets, noisy and estimated comparisons, and the
computational cost of each comparison as future work. 

\section*{Author contribution and use of AI tools}
The author used AI tools to assist with literature research, mathematical analysis, code development, and manuscript preparation. The author assumes
full responsibility for the content of the article.

\appendix
\section{Proof of Theorem~\ref{thm:one}}\label{app:one}

We first isolate the root used in the theorem and establish positivity of
the parameters. We then compute the target-conditioned moments, use them
to bound success for every admissible algorithm and construct a state and
measurement that attain the bound.

Throughout, $n\ge3$ and $1\le r<n$, with $m=n-1$, $k=n-2$, $p=r/n$, $\upsilon=n-r$
and $h=n-2r$. For $n=2$, one classical comparison suffices.

\subsection{Root isolation and positivity}
The constraints give $|h|\le k$ and $nk-h^2\ge2k>0$. Direct substitution gives
\begin{equation}\label{eq:Fone}
F_{n,r}(1)=-3(h+k)^2=-12(n-r-1)^2 .
\end{equation}
If $h>0$, the coefficients of $F_{n,r}$ have signs $+,+,-,-$ in descending
degree, so Descartes' rule gives exactly one positive root.
Equation~\eqref{eq:Fone} implies that the root exceeds $1$. Since $h\le n-2$,
$F_{n,r}(3)=(21n+6)k+h(30n+36-27h)>0$, so the root is less than $3$.
If $h=0$, the polynomial is $3nkz^2-6mk$, which gives the stated radical.
If $h<0$ and $r<n-1$, write $a=-h$, so $0<a<k$. Then
$F_{n,r}(k/a)=nk(k^2/a^2-1)>0$, so there is a root between $1$ and $k/a$. The
negative leading coefficient gives another positive root above $k/a$, and
Descartes' rule allows at most two, so the first is the smallest. At
$r=n-1$, $F_{n,n-1}(z)=-2k(z-1)\bigl[nz^2+(n-3)z-3m\bigr]$, and the bracket
has one positive root, which exceeds $1$ because the bracket equals $-n$ at
$z=1$. The smallest positive root is then $1$.

Define
\[
\kappa=3k+2h\zeta,\qquad \rho=\frac{\zeta}{\sqrt m},\qquad
\gamma=\frac{h\zeta}{\sqrt m\,\kappa},\qquad R=1-\frac{k\gamma^2}{2}.
\]
For $h>0$, $\kappa\ge3k$ and $0<\gamma<1/(2\sqrt m)$. For $h=0$, $\gamma=0$.
For $h<0$, the selected root satisfies $\zeta\le k/(-h)$, so $\kappa\ge k$ and
$|\gamma|\le1/\sqrt m$, which gives $R\ge1-k/(2m)=n/(2m)>0$. So $\kappa>0$ and
$R>0$ in all cases.

\subsection{Target-conditioned moments}
Use the real feature space $\mathcal H=\R\ket0\oplus\Lambda^2\R^n$ with
$\phi_x=\ket0+\sum_{i<j}\sgn(x_i-x_j)\,e_i\wedge e_j$, where the oriented
edges $e_i\wedge e_j$, $i<j$, are orthonormal. Let
$\omega=n^{-1/2}\sum_ie_i$, and let $\mathcal H_0=\R\ket0$,
$\mathcal V=\{v\wedge\omega:v\perp\omega\}$ and
$\mathcal C=\Lambda^2(\omega^\perp)$, of dimensions $1$, $m$ and $mk/2$, with
orthogonal projectors $P_0$, $P_V$ and $P_C$. For output label $1$ put
$b=\sqrt{n/m}\,(e_1-n^{-1}\sum_ie_i)$ and
$W=\{w\in\R^n:w_1=0,\ \sum_{j\ge2}w_j=0\}$, of dimension $k$, with an
orthonormal basis $f_1,\ldots,f_k$. Use the decomposition
\begin{equation}\label{eq:local}
\mathcal H=\operatorname{span}\{\ket0,b\wedge\omega\}\oplus
\bigoplus_{a=1}^k\operatorname{span}\{f_a\wedge\omega,\ b\wedge f_a\}
\oplus\Lambda^2W .
\end{equation}
For a uniform order define $M_i=\E_x\bigl[\ind{x_i\le r}\phi_x\phi_x^{\mathsf T}\bigr]$.

\begin{lemma}\label{lem:moments}
Put $\nu=p(n+1)/3$, $\delta=p\upsilon h/(3mk)$, $\tau=p/3$,
$T=\nu+2k\delta=(p/m)\bigl(\upsilon^2+(r^2-1)/3\bigr)$, $S=\nu-2\delta$,
$\theta=p\upsilon/\sqrt m$ and $\eta=\sqrt m\,\delta$. In the decomposition~\eqref{eq:local},
\[
M_1=\begin{pmatrix}p&-\theta\\-\theta&T\end{pmatrix}\oplus
\bigoplus_{a=1}^k\begin{pmatrix}S&\eta\\\eta&\tau\end{pmatrix}\oplus\tau I_{k(k-1)/2}.
\]
\end{lemma}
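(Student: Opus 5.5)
The plan is to compute $M_1=\E_x[\ind{x_1\le r}\phi_x\phi_x^{\mathsf T}]$ entrywise in the oriented-edge basis and then change to the decomposition~\eqref{eq:local}. The entries of $M_1$ are: the scalar $\E[\ind{x_1\le r}]=p$; the first moments $\E[\ind{x_1\le r}\sgn_{ij}]$; and the second moments $\E[\ind{x_1\le r}\sgn_{ij}\sgn_{kl}]$. The moments vanish unless the edges $\{i,j\}$ and $\{k,l\}$ share a vertex or coincide, because for disjoint edges not containing label $1$ the sign is independent of everything else and is odd. So conditional on $x_1=t$, only a few types of entries survive.

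\begin{itemize}
\item First moments: for $j\ne1$, $\E[\sgn(x_1-x_j)\mid x_1=t]=(2t-n-1)/(n-1)$. Edges not touching $1$ give zero.
\item Diagonal entries: these equal $p$, since $\sgn^2=1$.
\item Adjacent edges through label $1$: for edges $\{1,j\},\{1,l\}$ the conditional product is $\E[\sgn_{1j}\sgn_{1l}\mid t]$, a degree-two expression in $t$ given by \eqref{eq:rooted} with $k=2$.
\item Adjacent edges through another label: for edges $\{j,1\},\{j,l\}$ and for $\{j,i\},\{j,l\}$ with $i,j,l\ne1$, the values follow from the same rooted-component formula~\eqref{eq:rooted}.
\end{itemize}

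Summing each over $t\le r$ with weight $1/n$ gives a handful of explicit constants in $n$ and $r$.

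Next I would use the symmetry. $M_1$ is invariant under permutations of labels $2,\dots,n$, and $\Lambda^2\R^n$ restricted to this $S_{n-1}$ action decomposes as stated. The span of $\ket0$ and $b\wedge\omega$ carries the trivial isotypic part. The pairs $\{f_a\wedge\omega,\,b\wedge f_a\}$ are two copies of the standard representation on $W$. $\Lambda^2W$ is a single further irreducible, with multiplicity one. By Schur's lemma, $M_1$ is block diagonal in this decomposition: a $2\times2$ block on the trivial part, a $2\times2$ block tensored with $I_k$ on the standard part, and a scalar on $\Lambda^2W$. Writing the edge vectors through $e_1=\sqrt{m/n}\,b+n^{-1/2}\omega$ and $e_j$ in terms of $b,\omega,f_a$ expresses each block entry as a finite sum of the constants above.

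The entries would then be evaluated in order:
\begin{itemize}
\item $\langle 0|M_1|b\wedge\omega\rangle$ from the first moments, giving $-\theta$ after normalization.
\item The $\Lambda^2W$ scalar $\tau$ as the diagonal $p$ minus adjacency contributions, which should collapse to $p/3$ because the conditional three-vertex product averages $1/3$ for edges not containing $1$.
\item $T$, $S$ and $\eta$ from the quadratic form on specific test vectors, for example $\langle b\wedge\omega,M_1\,b\wedge\omega\rangle$ and $\langle f_a\wedge\omega,M_1\,b\wedge f_a\rangle$.
\end{itemize}
As checks, $\tr M_1=p\dim\mathcal H$ and the alternative expression for $T$ must agree.

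The main obstacle is the bookkeeping of the degree-two sums $\sum_{t\le r}\E[\sgn_{1j}\sgn_{1l}\mid t]$ and the analogous sums where $1$ is a leaf, which produce $\nu$, $\delta$ and the factor $h=n-2r$. I would compute these via $\sum_{t\le r}(t-1)(n-t)$-type hypergeometric sums and simplify symbolically, checking against small cases ($n=3,4$) by brute-force enumeration.
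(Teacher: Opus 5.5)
Your plan is correct and is essentially the paper's argument: condition on $x_1=t$, note that only star and adjacent-edge sign moments survive, use the symmetry under the stabilizer of label $1$ to get the block form, and evaluate the blocks on test vectors; your Schur's lemma step plays the role of the paper's exchangeability remark. The paper packages your adjacency bookkeeping into two matrices. The internal covariance is $(I+D_m^{\mathsf T}D_m)/3$, and the star--internal cross-covariance is $-(A_t/2)D_m$ with $A_t=4(t-1)(n-t)/(mk)$. This makes $\tau=p/3$ on $\Lambda^2W$ immediate, because $D_m$ annihilates $\Lambda^2W$. The $W$-block then follows from two quadratic forms, on $e_1\wedge w$ and on $w\wedge\omega_o$ (where $\omega_o$ is the normalized all-ones vector on labels $2,\dots,n$), followed by a rotation.
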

\begin{proof}
Condition on $x_1=\ell$ and put $L_\ell=2\ell-n-1$. The signs on edges at
label $1$ sum to $L_\ell$, and $b\wedge\omega=m^{-1/2}\sum_{j\ge2}e_1\wedge e_j$,
so the coordinate along $b\wedge\omega$ is $L_\ell/\sqrt m$. The sums
$n^{-1}\sum_{\ell\le r}L_\ell=-p\upsilon$ and
$n^{-1}\sum_{\ell\le r}L_\ell^2=p\bigl(\upsilon^2+(r^2-1)/3\bigr)$ give the first block.
For a unit vector $w\in W$ use $g_w=e_1\wedge w$ and $y_w=w\wedge\omega_o$
with $\omega_o=m^{-1/2}\sum_{j\ge2}e_j$. The star signs at label $1$ contain
$\ell-1$ plus signs and $n-\ell$ minus signs, and exchangeability gives
$A_\ell:=\E[\langle g_w,\phi_x\rangle^2\mid x_1=\ell]=4(\ell-1)(n-\ell)/(mk)$.
For a uniform order on $m$ labels the oriented-edge covariance is
$(I+D_m^{\mathsf T}D_m)/3$, where $D_m$ has edge columns $e_i-e_j$. The
relative order of the other labels is uniform given $x_1=\ell$, so
$\E[\langle y_w,\phi_x\rangle^2\mid x_1=\ell]=n/3$. For the cross term, let $Y_j$ be the relative rank of a label $j\ne1$ among
the $m$ labels other than $1$. This rank is uniform on $[m]$. Then
$\E[\sgn(\ell-x_j)(2Y_j-m-1)]=-2(\ell-1)(m-\ell+1)/m$, and since
rank scores and the coordinates of $w$ sum to zero,
$\E[\langle g_w,\phi_x\rangle\langle y_w,\phi_x\rangle\mid x_1=\ell]=-\sqrt m\,A_\ell/2$.
Averaging over $\ell\le r$ gives
$A_0=n^{-1}\sum_{\ell\le r}A_\ell=2p(r-1)(3n-2r-2)/(3mk)$ and the block
$\left(\begin{smallmatrix}A_0&-\sqrt m A_0/2\\-\sqrt m A_0/2&pn/3\end{smallmatrix}\right)$.
The rotation $w\wedge\omega=(-g_w+\sqrt m\,y_w)/\sqrt n$,
$b\wedge w=(\sqrt m\,g_w+y_w)/\sqrt n$ transforms it into the stated block,
with $S=A_0+mp/3$, $\eta=\sqrt m(p/3-A_0/2)$ and $\tau=p/3$. Exchangeability
makes these forms proportional to the inner product on $W$, so orthogonal
basis vectors give no cross terms. Writing $\phi_{\rm st}$ and $\phi_{\rm int}$
for the star and internal parts of $\phi_x$,
$\E[\phi_{\rm st}\phi_{\rm int}^{\mathsf T}\mid x_1=\ell]=-(A_\ell/2)D_m$. The
internal cycle space $\Lambda^2W$ has zero incidence, covariance $I/3$, and
no coupling to the other blocks, which gives the last block.
\end{proof}

\subsection{Upper bound}
Let $Q=p+p\upsilon\zeta/m$, which is the right side of~\eqref{eq:one}, and define
\[
\xi=T+\frac{p\upsilon}{\zeta},\qquad \ell_C=\tau+\frac{\eta^2}{\xi-S},\qquad
\mathcal L=QP_0+\xi P_V+\ell_CP_C .
\]
Then $\xi-T=p\upsilon/\zeta>0$ and $\xi-S=p\upsilon(3k+2h\zeta)/(3k\zeta)>0$. Also
$\theta/(\xi-T)=\rho$, $\eta/(\xi-S)=\gamma$ and $Q-p=(\xi-T)\rho^2$, so the blocks of
$\mathcal L-M_1$ factor as
\[
\begin{pmatrix}Q-p&\theta\\\theta&\xi-T\end{pmatrix}=(\xi-T)\binom{\rho}{1}\begin{pmatrix}\rho&1\end{pmatrix},
\qquad
\begin{pmatrix}\xi-S&-\eta\\-\eta&\ell_C-\tau\end{pmatrix}=(\xi-S)\binom{1}{-\gamma}\begin{pmatrix}1&-\gamma\end{pmatrix},
\]
and the last block is $(\xi-S)\gamma^2I\succeq0$. Hence $\mathcal L\succeq M_1$,
and covariance under relabeling gives $\mathcal L\succeq M_i$ for every $i$.
Each edge has $\|P_V(e_i\wedge e_j)\|^2=2/n$ and $\|P_C(e_i\wedge e_j)\|^2=k/n$,
so the edge diagonal of $\mathcal L$ is $(2\xi+k\ell_C)/n$, and
\[
nQ-2\xi-k\ell_C=\frac{p\upsilon}{m\zeta(3k+2h\zeta)}F_{n,r}(\zeta)=0 .
\]
Every diagonal entry of $\mathcal L$ in the feature basis therefore equals $Q$.

To apply this matrix bound to every admissible algorithm, we express its
success in the feature coordinates. This is the one-query case of the
semidefinite characterization of quantum query
algorithms~\cite{BarnumSaksSzegedy2003}. Fix the private randomness of an
algorithm, purify it, and absorb the processing after the call into the
final measurement. Measurements before
the call are input independent and can be deferred. Write the state before
the call as $\sum_{i,j}\ket{i,j}\bigl(\ket+\ket{u_{ij}}+\ket-\ket{v_{ij}}\bigr)+\ket\chi$,
where $\ket\chi$ is the component of branches that skip the call, flagged in
the workspace. In the eigenbasis of the response bit the call fixes the $+$
sector, multiplies the $-$ sector at $(i,j)$ with $i<j$ by $\sgn(x_i-x_j)$,
multiplies the reversed pair by the opposite sign, and fixes self-addresses.
The state after the call is therefore $\ket{\Psi_x}=\sum_\mu(\phi_x)_\mu\ket{a_\mu}$,
where $(\phi_x)_\mu$ are the coordinates of $\phi_x$ and
\[
\ket{a_0}=\ket\chi+\sum_{i,j}\ket{i,j,+}\ket{u_{ij}}+\sum_i\ket{i,i,-}\ket{v_{ii}},
\qquad
\ket{a_{ij}}=\ket{i,j,-}\ket{v_{ij}}-\ket{j,i,-}\ket{v_{ji}}\quad(i<j).
\]
These vectors have disjoint supports on the flag, address and response
registers, so they are orthogonal, and $\sum_\mu\|a_\mu\|^2=1$. For the output measurement
$\{\widetilde E_i\}$ put $(B_i)_{\mu\mu'}=\langle a_\mu|\widetilde E_i|a_{\mu'}\rangle$ and
$\lambda_\mu=\|a_\mu\|^2$. Then $B_i\succeq0$, $\sum_iB_i=\diag(\lambda)$ and
$\sum_\mu\lambda_\mu=1$. The matrices $B_i$ may be complex Hermitian, and since
$M_i$ is real only their real parts contribute to the success. The average success is
\[
\sum_i\tr(M_iB_i)\le\sum_i\tr(\mathcal LB_i)=\tr\bigl(\mathcal L\diag(\lambda)\bigr)=Q .
\]
The worst-case success is at most the average, so $Q_1(n,r)\le Q$.

\subsection{Attainment}
The attaining construction uses the kernels of the slack blocks in the
upper-bound proof. We give the initial state and measurement, then verify
that the measurement is complete and that the success probability equals
the bound for every order.

Let $N=\binom n2=nm/2$, $\alpha=2mR/(n\zeta^2+2mR)$ and $\beta=(1-\alpha)/N$.
Prepare
\[
\ket{\psi_{\rm in}}=\sqrt\alpha\,\ket{1,1,+}+\sqrt\beta\sum_{i<j}\ket{i,j,-}.
\]
One call gives $\ket{\psi_x}=C_\alpha\phi_x$ with
$C_\alpha=\sqrt\alpha P_0+\sqrt\beta(P_V+P_C)$, in the encoding
$\ket0=\ket{1,1,+}$ and $\ket{ij}=\ket{i,j,-}$. In the
decomposition~\eqref{eq:local} define the seed effect
\[
E_1=\frac1n\begin{pmatrix}1&-\sqrt{mR}\\-\sqrt{mR}&mR\end{pmatrix}\oplus
\bigoplus_{a=1}^k\frac{m}{2n}\begin{pmatrix}\gamma^2&\gamma\\\gamma&1\end{pmatrix}\oplus0,
\]
and $E_i=U_\sigma E_1U_\sigma^{\mathsf T}$ for any permutation $\sigma$ with
$\sigma(1)=i$, where $U_\sigma$ fixes $\ket0$ and relabels the oriented edges.
The seed is invariant under the stabilizer of label $1$, so $E_i$ is well
defined. Its traces on the global sectors are $1/n$, $mR/n+km\gamma^2/(2n)=m/n$
and $mk/(2n)$. A label-invariant symmetric matrix on the edge space is a
combination of $I$ and $D_n^{\mathsf T}D_n$, which act as scalars on
$\mathcal V$ and $\mathcal C$, and there is no invariant coupling between
$\ket0$ and the edges. Since $\sum_iE_i$ is invariant and its sector traces
equal the sector dimensions, $\sum_iE_i=I$. On the full query space the
effects are completed by adding $n^{-1}$ times the projector onto the
complement of the encoded subspace, which the prepared states never reach.

Finally $(\alpha/\beta)\rho^2=mR$. The range of the first block of
$C_\alpha E_1C_\alpha$ is the kernel of the first slack block, spanned by
$(1,-\rho)^{\mathsf T}$. The range of each repeated block is the kernel of
the second slack block, spanned by $(\gamma,1)^{\mathsf T}$. So
$\tr\bigl[(\mathcal L-M_1)C_\alpha E_1C_\alpha\bigr]=0$, and by covariance the same
holds for every $i$. The average success is
\[
\sum_i\tr(M_iC_\alpha E_iC_\alpha)=\tr(\mathcal LC_\alpha^2)=\alpha Q+N\beta Q=Q,
\]
and covariance of the state family makes the success equal to $Q$ on every
order.

The closed forms in Corollary~\ref{cor:one} follow from the same cubic.
For $r=1$, $F_{n,1}(z)=2k(z+1)\bigl[nz^2+(3-n)z-3m\bigr]$, whose positive
root $\bigl(n-3+\sqrt{13n^2-18n+9}\bigr)/(2n)$ gives the first closed form.
For $r=n/2$, $h=0$ gives the second closed form.

\end{document}